\definecolor{TUMblue}{RGB}{0, 101, 189}
\definecolor{TUMlightblue}{RGB}{100,160,200}
\definecolor{TUMgreen}{RGB}{162,173,0}
\definecolor{TUMorange}{RGB}{227,114,034}
\definecolor{TUMivory}{RGB}{218,215,203}
\pretocmd{\NAT@citex}{%
	\let\NAT@hyper@\NAT@hyper@citex
	\def\NAT@postnote{#2}%
	\setcounter{NAT@total@cites}{0}%
	\setcounter{NAT@count@cites}{0}%
	\forcsvlist{\stepcounter{NAT@total@cites}\@gobble}{#3}}{}{}
\newcounter{NAT@total@cites}
\newcounter{NAT@count@cites}
\def\NAT@postnote{}
\def\NAT@hyper@citex#1{%
	\stepcounter{NAT@count@cites}%
	\hyper@natlinkstart{\@citeb\@extra@b@citeb}#1%
	\ifnumequal{\value{NAT@count@cites}}{\value{NAT@total@cites}}
	{\ifNAT@swa\else\if*\NAT@postnote*\else%
		\NAT@cmt\NAT@postnote\global\def\NAT@postnote{}\fi\fi}{}%
	\ifNAT@swa\else\if\relax\NAT@date\relax
	\else\NAT@@close\global\let\NAT@nm\@empty\fi\fi
	\hyper@natlinkend}
\renewcommand\hyper@natlinkbreak[2]{#1}
\newcommand{\mynewtheorem}[2]{
	\newaliascnt{#1}{dummy}
	\newtheorem{#1}[#1]{#2}
	\aliascntresetthe{#1}
	\expandafter\def\csname #1autorefname\endcsname{#2}
}
\theoremstyle{definition}
\def\equationautorefname~#1\null{Equation~(#1)\null}
\newcommand{\aref}[1]{\hyperref[#1]{Appendix~\ref{#1}}}
\newcommand{\Ebb}{\mathbb{E}}
\newcommand{\Nbb}{\mathbb{N}}
\newcommand{\Rbb}{\mathbb{R}}
\newcommand{\gammab}{\mathbf{\boldsymbol{\gamma}}}
\newcommand{\omegab}{\boldsymbol{\omega}}
\newcommand{\rb}{\mathbf{r}}
\newcommand{\sigmab}{\boldsymbol{\sigma}}
\newcommand{\ub}{\mathbf{u}}
\newcommand{\Ub}{\mathbf{U}}
\newcommand{\vb}{\mathbf{v}}
\newcommand{\wb}{\mathbf{w}}
\newcommand{\Xb}{\mathbf{X}}
\newcommand{\xb}{\mathbf{x}}
\newcommand{\xib}{\boldsymbol{\xi}}
\newcommand{\Cc}{\mathcal{C}}
\newcommand{\Dc}{\mathcal{D}}
\newcommand{\Fc}{\mathcal{F}}
\newcommand{\Gc}{\mathcal{G}}
\newcommand{\Jc}{\mathcal{J}}
\newcommand{\Nc}{\mathcal{N}}
\newcommand{\Rc}{\mathcal{R}}
\newcommand{\Uc}{\mathcal{U}}
\newcommand{\Wc}{\mathcal{W}}
\newcommand{\be}{\begin{equation}}
	\newcommand{\ee}{\end{equation}}
\newcommand{\eps}{\varepsilon}
\newcommand{\Ast}{\mathop{\scalebox{1.5}{\raisebox{-0.2ex}{$\ast$}}}}%
\DeclareMathOperator{\aKL}{aKL}
\DeclareMathOperator{\argmax}{arg\,max}
\DeclareMathOperator{\dKL}{dKL}
\DeclareMathOperator{\diag}{diag}
\DeclareMathOperator{\KL}{KL}
\DeclareMathOperator{\MCKL}{MCKL}
\DeclareMathOperator{\sdKL}{sdKL}
\DeclareMathOperator{\tr}{tr}
\newcommand*\diff{\mathop{}\!\mathrm{d}}
\begin{document} 
	
	{
		\renewcommand*{\thefootnote}{\fnsymbol{footnote}}
		\title{\textbf{\sffamily Model distances for vine copulas in high dimensions
			}}
			\date{\small \today}
			\author{Matthias Killiches\footnote{Zentrum Mathematik, Technische Universit\"at M\"unchen, Boltzmannstra\ss e 3, 85748 Garching, Germany} \footnote{Corresponding author, email: \texttt{matthias.killiches@tum.de}.} , Daniel Kraus$^*$ and Claudia Czado$^*$}
		
			\maketitle
			
			\begin{abstract}
				Vine copulas are a flexible class of dependence models
				consisting of bivariate building blocks and have proven to be
				particularly useful in high dimensions. Classical model distance measures require	multivariate integration and thus suffer from the curse of
				dimensionality. In this paper we provide numerically tractable methods
				to measure the distance between two vine copulas even in high
				dimensions. For this purpose, we consecutively develop three new distance measures based on the Kullback-Leibler distance, using the result that it can be expressed as the sum over expectations of KL distances between univariate conditional densities, which can be easily obtained for vine copulas. To reduce numerical calculations we approximate these expectations on adequately designed grids, outperforming Monte Carlo-integration with respect to computational time. In numerous examples and applications we illustrate the strengths and weaknesses of the developed distance measures.\\
				
				\noindent \textsf{Keywords:} \textit{Vine copulas; model distances; Kullback-Leibler; Monte Carlo-integration.}
			\end{abstract}
		}

\section{Introduction}\label{sec:intro}
In the course of growing data sets and increasing computing power statistical data analysis has considerably developed within the last decade. The necessity of proper dependence modeling has become evident at least since the financial crisis of 2007. Using vine copulas is a popular option to approach this task. \cite{bedford2002vines} described how multivariate distributions can be sequentially decomposed into bivariate building blocks via conditioning. Since the seminal paper of \cite{aasczado}, which developed statistical inference for this method, many aspects of vines have been studied: \cite{dissmann2013selecting} provide a sequential estimation algorithm for vines, \cite{panagiotelis2012pair} treat vine copulas for discrete data and \cite{nagler2015evading} examine non-parametric vine copulas. Further, there have been various applications to data from many fields such as finance \citep{maya2015latin,kraus2015d}, sociology \citep{cooke2015vine} or hydrology \citep{killiches2015maxima}. The advantage of these models is that they are flexible and numerically tractable even in high dimensions.

Since it is interesting in many cases to determine how much two models differ, some authors like \cite{stoeber2013simplified} and \cite{schepsmeier2015efficient} use the \textit{Kullback-Leibler distance} \citep{kullback1951information} as a model distance between vines. However, all popular distance measures require multivariate integration, which is why they can only deal with up to three- or four-dimensional models in a reasonable amount of time.

In this paper we will address the question of how to measure the distance between two vine copulas even for high dimensions. For this purpose, we develop methods based on the Kullback-Leibler (KL) distance, where we use the fact that it can be expressed as the sum over expectations of KL distances between univariate conditional densities. By cleverly approximating these expectations in different ways, we introduce three new distance measures with varying focuses. The \textit{approximate Kullback-Leibler distance} (aKL) aims to approximate the true Kullback-Leibler distance via structured Monte Carlo integration and is a computationally tractable distance measure in up to five dimensions. The \textit{diagonal Kullback-Leibler distance} (dKL) focuses on the distance between two vine copulas on specific conditioning vectors, namely those lying on certain diagonals in the space. We show that even though the resulting distance measure does not approximate the KL distance in a classical sense, it still reproduces its qualitative behavior quite well. While this way of measuring distances between vines is fast in up to ten dimensions, we still have to reduce the number of evaluation points in order to get a numerically tractable distance measure for dimensions 30 and higher. By concentrating on only one specific diagonal we achieve this, defining the \textit{single diagonal Kullback-Leibler distance} (sdKL). In numerous examples and applications we illustrate that the proposed methods are valid distance measures and outperform benchmark approaches like Monte Carlo integration regarding computational time.

The paper is organized as follows: \autoref{sec:vines} introduces vine copulas and basic properties. In \autoref{sec:modeldist} we develop the above mentioned model distances for vines and compare their performances in various settings. \autoref{sec:conclusion} concludes with a summary and an outlook to ongoing research.

\section{Vine Copulas}\label{sec:vines}

A \emph{copula} $C\colon [0,1]^d\to [0,1]$ is a $d$-dimensional distribution function on $[0,1]^d$ with uniformly distributed margins. Since the publication of \cite{Sklar}, copulas have gained more and more interest and have been a frequent subject in many areas of probabilistic and statistical research. Sklar's Theorem states that for every joint distribution function $F\colon \Rbb^d \to [0,1]$ of a $d$-dimensional random variable $(X_1,\ldots,X_d)'$ with univariate marginal distribution functions $F_j$, $j=1,\ldots,d$, there exists a copula $C$ such that
\begin{equation}\label{eq:sklar}
F(x_1,\ldots,x_d)=C\left(F_1(x_1),\ldots, F_d(x_d)\right).
\end{equation}
This copula $C$ is unique if all $X_j$ are continuous random variables. Further, if the so-called \emph{copula density}
\[
c(u_1,\ldots,u_d):=\frac{\partial^d}{\partial u_1 \cdots \partial u_d}C(u_1,\ldots,u_d)
\]
exists, one has
\[
f(x_1,\ldots,x_d)=c\left(F_1(x_1),\ldots, F_d(x_d)\right)f_1(x_1)\cdots f_d(x_d),
\]
where $f_j$ are the marginal densities. In the following we will always assume absolute continuity of $C$ and the existence of $c$. \autoref{eq:sklar} can also be used to define a multivariate distribution by combining a copula $C$ and marginal distribution functions $F_j$. Thus, marginals and dependence structure can be modeled separately, as we can specify the copula $C$ independently of the marginal distributions. A thorough overview over copulas can be found in \cite{joe1997multivariate} and \cite{nelsen2006introduction}.

There are several multivariate parametric copula families, for example Gaussian, t, Gumbel, Clayton and Joe copulas. Being specified by a small number of parameters (usually 1 or 2), these models are rather inflexible in high dimensions. Therefore, \cite{bedford2002vines} suggested a method for constructing copula densities based on the combination of bivariate building blocks: \emph{vines}. The concept of vine copulas, also referred to as \textit{pair-copula constructions} (PCCs), was used by \cite{aasczado} to develop statistical inference methods.

As an example, a three-dimensional copula density $c$ of a random vector $(U_1,U_2,U_3)'$ with $U_j\sim \text{uniform}(0,1)$ can be decomposed by conditioning on $U_2=u_2$ and using $c_j(u_j)=1$:
\begin{equation} \label{eq:3dvinedensity}
\begin{split}
c(u_1,u_2,u_3)&=c_{13| 2}(u_1,u_3| u_2)\,c_2(u_2)\\
&\stackrel{Sklar}{=} c_{13;2}\left(C_{1| 2}(u_1| u_2),C_{3| 2}(u_3|u_2);u_2\right)c_{1|2}(u_1| u_2)\,c_{2| 3}(u_2| u_3)\\
&=c_{13;2}\left(C_{1|2}(u_1|u_2),C_{3|2}(u_3|u_2);u_2\right)c_{12}(u_1,u_2)\,c_{23}(u_2,u_3),
\end{split}
\end{equation}
where $c_{13| 2}(\,\cdot\,,\cdot\,|u_2)$ denotes the density of the conditional distribution of $(U_1,U_3)|U_2=u_2$, while $c_{13;2}(\,\cdot\,,\cdot\,;u_2)$ is the associated copula density. The distribution function of the conditional distribution of $U_j$ given $U_2=u_2$ is denoted by $C_{j|2}(\,\cdot\,| u_2)$, $j=1,3$. Hence, we have expressed the three-dimensional copula density as the product over three bivariate pair-copulas.

Of course, there are alternative decompositions since the choice of $U_2$ as conditioning variable was arbitrary. For example, we also could have conditioned on $U_1$ or $U_3$ such that
\[
c(u_1,u_2,u_3)=c_{23;1}\left(C_{2|1}(u_2|u_1),C_{3|1}(u_3|u_1);u_1\right)c_{12}(u_1,u_2)\,c_{13}(u_1,u_3),
\]
\[
c(u_1,u_2,u_3)=c_{12;3}\left(C_{1|3}(u_1|u_3),C_{2|3}(u_2|u_3);u_3\right)c_{13}(u_1,u_3)\,c_{23}(u_2,u_3).
\]
This way of decomposing copula densities into bivariate building blocks can be extended to arbitrary dimensions. \cite{morales2011count} show that in $d$ dimensions there are $\frac{d!}{2}\cdot 2^{\binom{d-2}{2}}$ possible vine decompositions. This flexibility and variety of choice can be of great advantage when it comes to modeling.

\cite{dissmann2013selecting} and \cite{stoeber2012} provide a method of how to store the structure of a vine copula decomposition in a lower triangular matrix $M=\left(m_{i,j}\right)_{i,j=1}^d$ with $m_{i,j}=0$ for $i<j$, a so-called \emph{vine structure matrix}.
\begin{defi}[Vine structure matrix]\label{defi:RVM}
	A lower-triangular matrix $M=\left(m_{i,j}\right)_{i,j=1}^d$ is called a \emph{vine structure matrix} if it has the following three properties:
	\begin{enumerate}
		\item The entries of a selected column appear in every column to the left of that column, i.e.\ for $1\leq i<j\leq d$ it holds $\left\lbrace m_{j,j},\ldots,m_{d,j}\right\rbrace \subseteq \left\lbrace m_{i,i},\ldots,m_{d,i}\right\rbrace$.
		\item The diagonal entry of a column does not appear in any column further to the right, i.e.\ $m_{i,i}\notin \left\lbrace m_{i+1,i+1},\ldots,m_{d,i+1}\right\rbrace$ for $i=1,\ldots,d-1$.
		\item For $i=1,\ldots,d-2$ and $k=i+1,\ldots,d$ there exists a $j>i$ such that
		\begin{align*}
		\left\lbrace m_{k,i}, \left\lbrace m_{k+1,i},\ldots, m_{d,i} \right\rbrace \right\rbrace & = \left\lbrace m_{j,j}, \left\lbrace m_{k+1,j},m_{k+2,j},\ldots, m_{d,j} \right\rbrace \right\rbrace \quad \text{or}\\
		\left\lbrace m_{k,i}, \left\lbrace m_{k+1,i},\ldots, m_{d,i} \right\rbrace \right\rbrace & = \left\lbrace m_{k+1,j}, \left\lbrace m_{j,j},m_{k+2,j},\ldots, m_{d,j} \right\rbrace \right\rbrace.
		\end{align*}
		
	\end{enumerate}
\end{defi}
The structure of the vine is encoded in the matrix as subsequently described: A pair-copula is determined by the two conditioned variables and a (possibly empty) set of conditioning variables (e.g.\ $c_{1,3;2}$ has conditioned variables $U_1$ and $U_3$ and conditioning variable $U_2$). For each entry in the structure matrix, the entry $m_{i,j}$ itself and the diagonal entry $m_{j,j}$ in the corresponding column form the indices of the two conditioned variables, while the indices of the conditioning variables are given by the entries $m_{i+1,j},\ldots,m_{d,j}$ in the corresponding column below the considered entry. The bivariate pair-copulas are evaluated at the conditional distribution functions of the distributions of each of the conditioned variables given the conditioning variables.

Expressed in formulas this means: In $d$ dimensions the entry $m_{i,j}$ ($i>j$) together with $m_{j,j}$ and $m_{i+1},\ldots,m_{d,j}$ stands for the copula density of the (conditional) distribution of $U_{m_{i,j}}$ and $U_{m_{j,j}}$ given $\left(U_{m_{i+1,j}},\ldots,U_{m_{d,j}}\right)'=\left(u_{m_{i+1,j}},\ldots,u_{m_{d,j}}\right)'$ evaluated at $C_{m_{i,j}|m_{i+1,j},\ldots,m_{d,j}}\left(u_{m_{i,j}}|u_{m_{i+1,j}},\ldots,u_{m_{d,j}}\right)$ and
$C_{m_{j,j}|m_{i+1,j},\ldots,m_{d,j}}\left(u_{m_{j,j}}|u_{m_{i+1,j}},\ldots,u_{m_{d,j}}\right)$, i.e.
\begin{multline*}
c_{m_{i,j},m_{j,j};m_{i+1,j},\ldots,m_{d,j}}\Big(C_{m_{i,j}|m_{i+1,j},\ldots,m_{d,j}}\big(u_{m_{i,j}}|u_{m_{i+1,j}},\ldots,u_{m_{d,j}}\big),\\ C_{m_{j,j}|m_{i+1,j},\ldots,m_{d,j}}\big(u_{m_{j,j}}|u_{m_{i+1,j}},\ldots,u_{m_{d,j}}\big);u_{m_{i+1,j}},\ldots,u_{m_{d,j}}\Big).
\end{multline*}
Taking the product over all $d(d-1)/2$ pair-copula expressions implied by the vine structure matrix yields the copula density $c$ \citep[see][]{dissmann2013selecting}:
\begin{equation}\label{eq:vinepdfdecomp}
\begin{split}
c(u_1,\ldots,u_d)&=\prod_{j=1}^{d-1}\prod_{k=j+1}^{d} c_{m_{k,j},m_{j,j};m_{k+1,j},\ldots,m_{d,j}}\big(C_{m_{k,j}|m_{k+1,j},\ldots,m_{d,j}}\left(u_{m_{k,j}}|u_{m_{k+1,j}},\ldots,u_{m_{d,j}}\right),\\ &C_{m_{j,j}|m_{k+1,j},\ldots,m_{d,j}}\left(u_{m_{j,j}}|u_{m_{k+1,j}},\ldots,u_{m_{d,j}}\right)\!;u_{m_{k+1,j}},\ldots,u_{m_{d,j}}\big).
\end{split}
\end{equation}
In our three-dimensional example (\autoref{eq:3dvinedensity}) the structure matrix looks as follows:
\[
M=\begin{pmatrix}
m_{1,1} & m_{1,2} & m_{1,3} \\
m_{2,1} & m_{2,2} & m_{2,3} \\
m_{3,1} & m_{3,2} & m_{3,3}
\end{pmatrix}=\begin{pmatrix}
1 & 0 & 0 \\
3 & 2 & 0 \\
2 & 3 & 3
\end{pmatrix}.
\]
The entries $m_{3,1}=2$ (together with $m_{1,1}=1$) and $m_{3,2}=3$ (together with $m_{2,2}=2$) in the last row represent $c_{1,2}(u_1,u_2)$ and $c_{2,3}(u_2,u_3)$, respectively. In both cases, the conditioning set is empty because the considered entries are the last ones in their columns. The entry $m_{2,1}$ (together with $m_{1,1}$ and $m_{3,1}$) encodes the expression $c_{1,3;2}\left(C_{1|2}(u_1|u_2),C_{3|2}(u_3|u_2);u_2\right)$ since the indices of the conditioned variables are given by $m_{2,1}=3$ and $m_{1,1}=1$ and the conditioning variable is $m_{3,1}=2$. Multiplying these three factors leads to the expression from \autoref{eq:3dvinedensity}. Note that there is not a unique way of encoding a given vine decomposition into a structure matrix. For instance, exchanging $m_{2,2}$ and $m_{3,2}$ in the above example yields the same vine decomposition.

Property 2 from \autoref{defi:RVM} implies that the diagonal of any vine structure matrix is a permutation of ${1\!:\!d}$, where we use the notation ${r\!:\!s}$ to describe the vector $\left( r,r+1,\ldots,s\right)' $ for $r\leq s$. In order to simplify notation, for the remainder of the paper we assume that the diagonal of a $d$-dimensional structure matrix is ${1\!:\!d}$. This assumption comes without any loss of generality since relabeling of the variables suffices to obtain the desired property.

The following \autoref{prop:column} states that for a vine copula with structure matrix $M$ the (univariate) conditional density $c_{j|(j+1):d}$ of $U_{j}\,|\left(U_{j+1},\ldots,U_d\right)'=\left(u_{j+1},\ldots,u_{d}\right)'$ can be calculated by taking the product over all pair-copula expressions corresponding to the entries in the $j$th column of $M$. A proof can be found in \autoref{App:column}.
\begin{prop}\label{prop:column}
	Let $\Ub=(U_1,\ldots,U_d)'$ be a random vector with vine copula density $c$ and corresponding structure matrix ${M=\left(m_{i,j}\right)_{i,j=1}^d}$. Then, for $j<d$
	\begin{equation}\label{eq:conddens}
	\begin{split}
	c_{j|(j+1):d}&(u_{j}|u_{j+1},\ldots,u_{d})=\\
	&\prod_{k=j+1}^{d}c_{m_{k,j},m_{j,j};m_{k+1,j},\ldots,m_{d,j}}\big(C_{m_{k,j}|m_{k+1,j},\ldots,m_{d,j}}\left(u_{m_{k,j}}|u_{m_{k+1,j}},\ldots,u_{m_{d,j}}\right),\\ 
	& \qquad \qquad  C_{m_{j,j}|m_{k+1,j},\ldots,m_{d,j}}\left(u_{m_{j,j}}|u_{m_{k+1,j}},\ldots,u_{m_{d,j}}\right)\!;u_{m_{k+1,j}},\ldots,u_{m_{d,j}}\big).
	\end{split}
	\end{equation}
\end{prop}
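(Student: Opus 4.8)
The plan is to prove the column formula by telescoping the standard recursion for conditional distribution functions along the $j$th column of $M$, after first identifying which variables actually occur in that column. \textbf{Step 1 (which indices sit in column $j$).} I would first show that $\{m_{j,j},m_{j+1,j},\dots,m_{d,j}\}=\{j,j+1,\dots,d\}$ for every $j$, so that $m_{j,j}=j$ and $m_{j+1,j},\dots,m_{d,j}$ is a permutation of $j+1,\dots,d$. This is a downward induction on $j$: the column-$d$ set is $\{d\}$, and Property~1 of \autoref{defi:RVM} applied to columns $j$ and $j+1$ gives $\{j+1,\dots,d\}=\{m_{j+1,j+1},\dots,m_{d,j+1}\}\subseteq\{m_{j,j},\dots,m_{d,j}\}$; since the left-hand column has $d-j+1$ entries and the diagonal of $M$ is $1\!:\!d$, equality is forced. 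As a consequence, every factor on the right-hand side of \autoref{eq:conddens} depends on $u_1,\dots,u_d$ only through $u_j,\dots,u_d$, and for each $k\ge j+1$ it depends on $u_j$ solely through the conditional distribution function $C_{j\mid m_{k+1,j},\dots,m_{d,j}}(u_j\mid\,\cdot\,)$, which is its second copula argument.

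\textbf{Step 2 (one step of the recursion).} Abbreviate $D_k:=(m_{k+1,j},\dots,m_{d,j})$, so that $D_d=\emptyset$, $D_{k-1}=(m_{k,j},m_{k+1,j},\dots,m_{d,j})$, and, by Step~1, $D_j$ is a permutation of $(j+1,\dots,d)$. I would then use the standard fact that in a regular vine the pair-copula indexed by $m_{k,j}$ is the conditional copula of $(U_{m_{k,j}},U_j)$ given $U_{D_k}$, so that
\[
C_{j\mid D_{k-1}}\bigl(u_j\mid u_{m_{k,j}},u_{D_k}\bigr)=h_{j\mid m_{k,j};D_k}\Bigl(C_{j\mid D_k}(u_j\mid u_{D_k})\,\Big|\,C_{m_{k,j}\mid D_k}(u_{m_{k,j}}\mid u_{D_k});u_{D_k}\Bigr),
\]
where $h_{a\mid b;D}(x\mid y;u_D):=\partial_y C_{a,b;D}(x,y;u_D)$. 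Differentiating this identity with respect to $u_j$ and using $\partial_x h_{j\mid m_{k,j};D_k}(x\mid y;u_D)=c_{j,m_{k,j};D_k}(x,y;u_D)=c_{m_{k,j},j;D_k}(y,x;u_D)$, together with the observation (Step~1) that the remaining arguments do not involve $u_j$, shows that $\partial_{u_j}C_{j\mid D_{k-1}}$ equals the $k$th factor on the right of \autoref{eq:conddens} multiplied by $\partial_{u_j}C_{j\mid D_k}$.

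\textbf{Step 3 (telescoping and conclusion).} Iterating this relation from $k=d$ down to $k=j+1$, starting from $C_{j\mid D_d}(u_j)=u_j$ (uniform margins), hence $\partial_{u_j}C_{j\mid D_d}\equiv1$, I obtain that $\partial_{u_j}C_{j\mid D_j}(u_j\mid u_{D_j})$ equals the product over $k=j+1,\dots,d$ of those factors, i.e.\ exactly the right-hand side of \autoref{eq:conddens}. Since the entries of $D_j$ are $j+1,\dots,d$ and $\partial_{u_j}C_{j\mid(j+1):d}$ is the conditional density $c_{j\mid(j+1):d}$ by definition, the claimed identity follows.

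The step I expect to be the main obstacle is the recursion invoked in Step~2: verifying, for an arbitrary regular vine given only through its structure matrix, that the pair-copula attached to $m_{k,j}$ really is the relevant conditional copula and that the conditional distribution functions of $U_j$ propagate up the $j$th column in precisely this order is where the proximity condition (Property~3 of \autoref{defi:RVM}) is essential. I would either cite the foundational results of \cite{bedford2002vines} and the structure-matrix formalism of \cite{dissmann2013selecting,stoeber2012}, or, to keep the appendix self-contained, establish it by an auxiliary induction on $d$ that removes the first row and column of $M$ (itself again a vine structure matrix) and identifies the product of the pair-copula factors in columns $2,\dots,d-1$ with the marginal vine density of $(U_2,\dots,U_d)$. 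Everything else reduces to bookkeeping with conditioning sets and the chain rule.
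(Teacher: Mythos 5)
Your proof is correct but follows a genuinely different route from the paper's. The paper works at the level of joint densities: it invokes the trimming property of structure matrices (deleting the first $k$ rows and columns of $M$ yields the structure matrix of the marginal vine of $(U_{k+1},\ldots,U_d)$, so that $c_{(k+1):d}$ is the product of the pair-copula factors of the trimmed matrix), writes $c_{j|(j+1):d}=c_{j:d}/c_{(j+1):d}$, and observes that the quotient is exactly the product over the entries of the $j$th column. You instead work at the level of conditional distribution functions, telescoping the $h$-function recursion $C_{j|D_{k-1}}=h_{j|m_{k,j};D_k}\bigl(C_{j|D_k}\mid\cdot\,\bigr)$ up the column and differentiating in $u_j$; your Step 1 (that column $j$ contains exactly the indices $j,\ldots,d$, forced by Property 1 of \autoref{defi:RVM} and the diagonal convention) is a piece of bookkeeping the paper leaves implicit. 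The two arguments ultimately rest on the same foundational fact, namely that the pair-copula attached to entry $m_{k,j}$ really is the conditional copula of $(U_{m_{k,j}},U_{m_{j,j}})$ given $U_{D_k}$ under the distribution the vine defines, equivalently that sub-matrices encode marginal vines; the paper discharges this by citing Property 2.8(ii) of \cite{dissmann2013selecting}, and you correctly identify it as the main obstacle, proposing either the same citation or essentially the same trimming induction as a self-contained substitute. The paper's density-quotient argument is shorter once the trimming property is granted; your derivative/telescoping argument has the advantage of exhibiting explicitly how the conditional distribution functions appearing as arguments in \autoref{eq:conddens} propagate up the column, which is the computation one actually implements.
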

This proposition will prove itself to be crucial for the development of the distance measures from \autoref{sec:modeldist}. For simulation and Monte Carlo integration it is important that we can sample from vine copula distributions. \cite{stoeber2012} and \cite{joe2014dependence} provide sampling algorithms for arbitrary vine copulas. They are based on the inverse Rosenblatt transformation \citep{rosenblatt1952}: First, sample $w_j\sim \text{uniform}(0,1)$ for $j=1,\ldots,d$. Then, apply an inverse Rosenblatt transform $T_{c}$ to the uniform sample, i.e.\ $\ub=(u_1,\ldots,u_d)'=T_{c}(\wb)$, where $\wb=(w_1,\ldots,w_d)'$ is mapped from the (uniform) w-scale to the (warped) u-scale in the following way:
\begin{itemize}
	\item[$\bullet$] $u_d:=w_d$,
	\item[$\bullet$] $u_{d-1}:=C_{d-1| d}^{-1}(w_{d-1}| u_d)$,\\
	$\vdots$
	\item[$\bullet$] $u_1:=C_{1| 2:d}^{-1}(w_1| u_2,\ldots,u_d)$.
\end{itemize}
Note that the appearing inverse conditional distribution functions can be obtained easily for vine copulas.
When it comes to modeling, for tractability reasons most authors assume that for pair-copulas with a non-empty conditioning set the copula itself does not depend on the conditioning variables (e.g.\ $c_{13;2}(\,\cdot\,,\cdot\,;u_2)=c_{13;2}(\,\cdot\,,\cdot\,)$ for any $u_2\in[0,1]$). This assumption is referred to as the \emph{simplifying assumption}.\label{txt:sa} Among others, \cite{haff2010simplified}, \cite{acar2012beyond} and \cite{stoeber2013simplified} discuss when this assumption is justified. Since \textit{simplified vines}, i.e.\ vine copulas satisfying the simplifying assumption, are in practice the most relevant class of vine copulas for high dimensions, all examples in this paper consider simplified vines. Nevertheless, the presented concepts are also applicable to non-simplified vines (see \autoref{sec:conclusion}).\\

We typically work in a (simplified) parametric framework, where we specify each pair-copula of the vine decomposition as a parametric bivariate copula with up to two parameters. For the sake of notation, we borrow the concept of the vine structure matrix to introduce a lower-triangular family matrix $B=\left(b_{i,j}\right)_{i,j=1}^d$ and two lower-triangular parameter matrices $P^{(k)}=(p^{(k)}_{i,j})_{i,j=1}^d$, $k=1,2$, containing the pair-copula families and associated parameters of $c_{m_{i,j},m_{j,j}|m_{i+1,j},\ldots,m_{d,j}}$, respectively. Since we only use one- and two-parametric copula families, two parameter matrices are sufficient. The entries of the family and parameter matrices, $b_{i,j}$, $p^{(1)}_{i,j}$ and $p^{(2)}_{i,j}$, specify the pair-copula corresponding to the entry $m_{i,j}$. For one-parametric families we set the corresponding entry in the second parameter matrix to zero. For the family matrix, we use the following copula families (with corresponding abbreviations): Gaussian ($\Nc$), Student t (t), Clayton ($\Cc$), Gumbel ($\Gc$), Frank ($\Fc$) and Joe ($\Jc$). In order to compare the strengths of dependence of different copula families, we also compute the Kendall's $\tau$ values $k_{i,j}$ corresponding to pair-copulas with family $b_{i,j}$ and parameters $p^{(1)}_{i,j}$ and $p^{(2)}_{i,j}$ and store them in a lower-triangular matrix $K=\left(k_{i,j}\right)_{i,j=1}^d$. A (simplified) vine copula can then be written as the quadruple $\Rc=\left(M,B,P^{(1)},P^{(2)}\right)$.

\cite{dissmann2013selecting} developed a sequential estimation method that fits a simplified vine, i.e.\ the structure matrix as well the corresponding family and parameter matrices, to a given data set. This algorithm is also implemented in R \citep{R} as the function \texttt{RVineStructureSelect} in the package \texttt{VineCopula} \citep{VC}, which we use frequently throughout this paper.

Finally, for a simplified vine we define the associated \emph{nearest Gaussian vine}, i.e.\ the vine with the same structure matrix and Kendall's $\tau$ values but only Gaussian pair-copulas.
\begin{defi}[Nearest Gaussian vine]\label{defi:NGV}
	For a simplified vine copula $\Rc=(M,B,P^{(1)},P^{(2)})$ let $K=\left(k_{i,j}\right)_{i,j=1}^d$ denote the lower-triangular matrix containing the corresponding Kendall's $\tau$ values. Then, the \emph{nearest Gaussian vine} of $\Rc$ is given by $\tilde{\Rc}=(M,\tilde{B},\tilde{P}^{(1)},\tilde{P}^{(2)})$, where $\tilde{B}$ is a family matrix where all entries are Gaussian, $\tilde{P}^{(1)}=(\tilde{p}^{(1)}_{i,j})_{i,j=1}^d$ with $\tilde{p}^{(1)}_{i,j}=\sin\left(\frac{\pi}{2} k_{i,j}\right)$ and $\tilde{P}^{(2)}$ is a zero-matrix.
\end{defi}

\section{Model distances for vines}\label{sec:modeldist}

There are many motivations to measure the model distance between different vines. For example, \cite{stoeber2013simplified} try to find the simplified vine with the smallest distance to a given non-simplified vine. Further, it might be of interest to measure the distance between a vine copula and a Gaussian copula, both fitted to the same data set, in order to assess the need for the more complicated model. A common method to measure the distance between vines is the Kullback-Leibler distance.

\subsection{Kullback-Leibler distance}\label{subsec:KL}
\cite{kullback1951information} introduced a measure that indicates the distance between two $d$-dimensional statistical models with densities $f,g\colon\Rbb^d\to\left[0,\infty\right)$. The so-called \emph{Kullback-Leibler (KL) distance} between $f$ and $g$ is defined as
\begin{equation}\label{eq:KL}
\KL(f,g):=\int_{\xb\in\Rbb^d}\ln\left(\frac{f(\xb)}{g(\xb)}\right)f(\xb)\diff\xb.
\end{equation}
The KL distance between $f$ and $g$ can also be expressed as an expectation with respect to $f$:
\begin{equation}\label{eq:KL_expectation}
\KL(f,g)=\Ebb_f\left[\ln \left(\frac{f(\Xb)}{g(\Xb)}\right)\right],
\end{equation}
where $\Xb\sim f$. Note that the KL distance is non-negative and equal to zero if and only if $f=g$. It is not symmetric, i.e.\ in general $\KL(f,g)\neq\KL(g,f)$ for arbitrary densities $f$ and $g$.

Under the assumption that $f$ and $g$ have identical marginals, i.e.\ $f_j=g_j$, $j=1,\ldots,d$, we can show that the KL distance between $f$ and $g$ is equal to the KL distance between their corresponding copula densities. For this, let $c^f$ and $c^g$ be the copula densities corresponding to $f$ and $g$ and assume that $f$ and $g$, respectively, have the same marginal densities. Then, using Sklar's Theorem, we obtain
\be\label{eq:indepmargins}
\begin{split}
	\KL(f,g) & =\int_{\xb\in\Rbb^d}\ln\left(\frac{f(\xb)}{g(\xb)}\right)f(\xb)\diff\xb\\
	& =\int_{\xb\in\Rbb^d}\ln\left(\frac{c^f(F_1(x_1),\ldots,F_d(x_d))}{c^g(F_1(x_1),\ldots,F_d(x_d))}\right)c^f(F_1(x_1),\ldots,F_d(x_d))\prod_{j=1}^df_j(x_j)\diff x_1\cdots \diff x_d\\
	& =\int_{\ub\in[0,1]^d}\ln\left(\frac{c^f(u_1,\ldots,u_d)}{c^g(u_1,\ldots,u_d)}\right)c^f(u_1,\ldots,u_d)\diff u_1\cdots \diff u_d\\
	& =\KL\big(c^f,c^g\big),
\end{split}
\ee
where we applied the substitution $u_j:=F_j(x_j)$, $\diff u_j=f_j(x_j)\diff x_j$ for the third equality.

In this paper we are mainly interested in comparing different models that are obtained by fitting a data set. Since we usually first estimate the margins and afterwards the dependence structure, the assumption of identical margins is always fulfilled. Hence, we will in the following concentrate on calculating the Kullback-Leibler distance between copula densities.

Having a closer look at the definition of the KL distance, we see that for its calculation a $d$-dimensional integral has to be evaluated. In general, this cannot be done analytically and, further, is numerically infeasible in high dimensions. For example, \cite{schepsmeier2015efficient} stresses the difficulty of numerical integration in dimensions 8 and higher. In this section, we propose modifications of the Kullback-Leibler distance designed to be computationally tractable and still measure model distances adequately. These modifications are all based on the following proposition that shows that the KL distance between $d$-dimensional copula densities $c^{f}$ and $c^{g}$ can be expressed as the sum over expectations of KL distances between univariate conditional densities (see \autoref{App:KLdecomp} for a proof).
\begin{prop}\label{prop:KLdecomp}
	For two copula densities $c^f$ and $c^g$ it holds:
	\begin{equation}\label{eq:KLcKL}
	\KL\big(c^{f},c^{g}\big)=\sum_{j=1}^d\Ebb_{c^{f}_{(j+1):d}}\Big[\KL\Big(c^{f}_{j|(j+1):d}\left(\,\cdot\,|\Ub_{(j+1):d}\right),c^{g}_{j|(j+1):d}\left(\,\cdot\,|\Ub_{(j+1):d}\right)\Big)\Big],
	\end{equation}
	where $\Ub_{(j+1):d}\sim c^{f}_{(j+1):d}$ and ${(d+1)\!:\!d}:=\emptyset$.
\end{prop}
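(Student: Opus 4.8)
The plan is to prove the decomposition by iterated conditioning, peeling off one variable at a time. First I would write the $d$-dimensional copula density $c^f$ as a telescoping product of univariate conditional densities,
\[
c^f(u_1,\ldots,u_d)=\prod_{j=1}^{d}c^f_{j|(j+1):d}(u_j\,|\,u_{j+1},\ldots,u_d),
\]
which follows from the chain rule for densities together with the convention ${(d+1)\!:\!d}:=\emptyset$, so that the last factor is simply $c^f_d(u_d)=1$ (all margins are uniform). The same telescoping identity holds for $c^g$. Substituting both products into the defining expectation $\KL(c^f,c^g)=\Ebb_{c^f}\big[\ln(c^f(\Ub)/c^g(\Ub))\big]$ and using $\ln$ of a product equals the sum of $\ln$'s turns the integrand into
\[
\sum_{j=1}^{d}\ln\!\left(\frac{c^f_{j|(j+1):d}(U_j\,|\,\Ub_{(j+1):d})}{c^g_{j|(j+1):d}(U_j\,|\,\Ub_{(j+1):d})}\right),
\]
so by linearity of expectation it suffices to treat each summand separately.

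Second, for a fixed $j$ I would compute $\Ebb_{c^f}$ of the $j$th summand by the tower property: condition on $\Ub_{(j+1):d}$. The point is that the $j$th log-ratio depends on $\Ub$ only through the coordinates $U_j,U_{j+1},\ldots,U_d$, so taking the expectation first over $(U_1,\ldots,U_{j-1})$ given the rest integrates out trivially (the integrand does not involve those coordinates and $c^f$ integrates to the conditional density of $(U_j,\ldots,U_d)$). Then, conditioning further on $\Ub_{(j+1):d}=\ub_{(j+1):d}$, the inner expectation over $U_j$ with respect to the conditional density $c^f_{j|(j+1):d}(\,\cdot\,|\,\ub_{(j+1):d})$ is exactly, by the definition in \autoref{eq:KL_expectation} applied in one dimension,
\[
\KL\Big(c^f_{j|(j+1):d}(\,\cdot\,|\,\ub_{(j+1):d}),\,c^g_{j|(j+1):d}(\,\cdot\,|\,\ub_{(j+1):d})\Big).
\]
Taking the remaining outer expectation over $\Ub_{(j+1):d}\sim c^f_{(j+1):d}$ gives precisely the $j$th term on the right-hand side of \autoref{eq:KLcKL}. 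Summing over $j=1,\ldots,d$ yields the claim; for $j=d$ the conditioning set is empty and the term is just $\KL(c^f_d,c^g_d)=\KL(1,1)=0$, consistent with the convention.

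I expect the only delicate points to be bookkeeping rather than anything deep: one must make sure the conditional densities $c^f_{j|(j+1):d}$ and $c^g_{j|(j+1):d}$ are well defined (which holds under the standing assumption of absolute continuity and positivity of the copula densities, so no division by zero occurs and the logarithms make sense), and one must justify interchanging the order of integration in the tower-property step, which is legitimate because the integrand is either integrable — $\KL(c^f,c^g)<\infty$, the case of interest — or non-negative after grouping, so Tonelli/Fubini applies. The main obstacle, such as it is, is purely notational: carefully carrying the nested conditioning through $d$ coordinates and matching indices so that the ``already integrated'' block $(U_1,\ldots,U_{j-1})$ and the ``conditioning'' block $\Ub_{(j+1):d}$ line up with the factorization. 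Once the telescoping product is in place, the rest is the one-dimensional definition of KL applied conditionally plus linearity of expectation.
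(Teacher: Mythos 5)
Your proposal is correct and follows essentially the same route as the paper's proof: the telescoping factorization of $c^f$ and $c^g$ into univariate conditional densities, splitting the logarithm into a sum, marginalizing out $(U_1,\ldots,U_{j-1})$, and recognizing the inner integral over $U_j$ as a univariate KL distance before taking the outer expectation over $\Ub_{(j+1):d}$. The only difference is presentational — you phrase the marginalization via the tower property where the paper writes out the iterated integrals explicitly — and your remark on Tonelli/Fubini is a justification the paper leaves implicit.
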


This proposition is especially useful if $c^{f}$ and $c^{g}$ are vine copula densities, since the appearing conditional densities can easily be obtained (see \autoref{prop:column}). As an example, for four-dimensional copula densities $c^{f}$ and $c^{g}$, we can write:
\begin{equation}\label{eq:4dKL}
\begin{split}
\KL\big(c^{f},c^{g}\big)&=\Ebb_{c^{f}_{2:4}}\left[\KL\left(c^{f}_{1|2:4}(\,\cdot\,|\Ub_{2:4}),c^{g}_{1|2:4}(\,\cdot\,|\Ub_{2:4})\right)\right]\\
&\,\,+\Ebb_{c^{f}_{3,4}}\left[\KL\left(c^{f}_{2|3,4}(\,\cdot\,|\Ub_{3,4}),c^{g}_{2|3,4}(\,\cdot\,|\Ub_{3,4})\right)\right]\\
&\,\,+\Ebb_{c^{f}_{4}}\left[\KL\left(c^{f}_{3|4}(\,\cdot\,|U_{4}),c^{g}_{3|4}(\,\cdot\,|U_{4})\right)\right]\\
&\,\,+0,
\end{split}
\end{equation}
where for instance
\begin{multline*}
c^f_{1|2:4}(u_1|u_2,u_3,u_4)=c^f_{12}(u_1,u_2)\,c^f_{13;2}\left(C^f_{1|2}(u_1|u_2),C^f_{3|2}(u_3|u_2);u_2\right) \\  \times c^f_{14;23}\left(C^f_{1|23}(u_1|u_2,u_3),C^f_{4|23}(u_4|u_2,u_3);u_2,u_3\right).
\end{multline*}
The zero in the last line of \autoref{eq:4dKL} results from the fact that $c^{f}_{4}(u_4)= c^{g}_{4}(u_4)= 1$ for all $u_4\in[0,1]$. This is generally the case for the $d$th summand in \autoref{eq:KLcKL}, which will therefore be omitted in the following.

Note that the evaluation of the KL distance with this formula still implicitly requires the calculation of a $d$-dimensional integral since the expectation in the first summand of \autoref{eq:KLcKL} demands a $(d-1)$-dimensional integral of the KL distance between univariate densities. A commonly used meth-od to approximate expectations is \textit{Monte Carlo (MC) integration} (see for example \cite{caflisch1998monte}): For a random vector $\Xb\in\Rbb^d$ with density $f\colon\Rbb^d\rightarrow[0,\infty)$ and a scalar-valued function $h\colon\Rbb^d\rightarrow\Rbb$, the expectation $\Ebb_f[h(\Xb)]=\int_{\Rbb^d} h(\xb)f(\xb)\diff \xb$ can be approximated by
\begin{equation}\label{eq:MCI}
\Ebb_f[h(\Xb)]\approx \frac{1}{N_{\text{MC}}}\sum_{i=1}^{N_{\text{MC}}}{h(\xb_i)},
\end{equation}
where $\left\lbrace \xb_i\right\rbrace _{i=1}^{N_{\text{MC}}}$ is an i.i.d.\ sample of size $N_{\text{MC}}$ distributed according to the density $f$. However, the slow convergence rate of this method has been subject to criticism. Moreover, \cite{do2003fast} argues that when approximating the KL distance via Monte Carlo integration the random nature of the method is an unwanted property. Additionally, MC integration might produce negative approximations of KL distances even though it can be shown theoretically that the KL distance is non-negative.

As an alternative to Monte Carlo integration, in the next sections we propose several ways to approximate the expectation in \autoref{eq:KLcKL} by replacing it with the average over a $(d-j)$-dimensional non-random grid $\Uc_j$, such that
\begin{equation}\label{eq:KLapprox}
\KL\big(c^f,c^g\big) \approx\sum_{j=1}^{d-1}\frac{1}{\left|\,\Uc_j\right| }\sum_{\ub_{(j+1):d}\in\, \Uc_j} \!\!\!\KL\Big(c^{f}_{j|(j+1):d}(\,\cdot\,|\ub_{(j+1):d}),c^{g}_{j|(j+1):d}(\,\cdot\,|\ub_{(j+1):d})\Big).
\end{equation}
Note that, being a sum over univariate KL distances, this approximation produces non-negative results, regardless of the grids $\Uc_j$, $j=1,\ldots,d$. Now, the question remains how to choose the grids $\Uc_j$, such that the approximation is on the one hand fast to calculate and on the other hand still maintains the main properties of the KL distance. We will provide three possible answers to this question yielding different distance measures and investigate their performances.

Throughout the subsequent sections we assume the following setting: Let $\Rc^f$ and $\Rc^g$ be two $d$-dimensional vines with copula densities $c^f$ and $c^g$, respectively. We assume that their vine structure matrices have the same entries on the diagonals, i.e.\ $\diag(M^f)=\diag(M^g)$. Note that, although this assumption is a restriction, there are still $2^{\binom{d-2}{2}+d-2}$ different vine decompositions with equal diagonals of the structure matrix (cf.\ \autoref{prop:numbervines}).\footnote{This includes, for example, C- and D-vines \citep{aasczado} having the same diagonal.} As before, without loss of generality we set the diagonals equal to ${1\!:\!d}$.

\subsection{Approximate Kullback-Leibler distance}\label{subsec:aKL}

We illustrate the idea of the approximate Kullback-Leibler distance at the example of two three-dimensional vines $\Rc^f$ and $\Rc^g$. For the first summand ($j=1$) of \autoref{eq:KLapprox}, the KL distance between $c^f_{1|2,3}(\,\cdot\,|u_2,u_3)$ and $c^g_{1|2,3}(\,\cdot\,|u_2,u_3)$ is calculated for all pairs $(u_2,u_3)'
$ contained in the grid $\Uc_1$. In this example we assume that the pair-copula $c^f_{2,3}$ is a Gumbel copula with parameter $\theta=6$ (implying a Kendall's $\tau$ value of $0.83$). Regarding the choice of the grid, if we used the Monte Carlo method, $\Uc_1$ would contain a random sample of $c^f_{2,3}$. Recall from \autoref{sec:vines} that such a sample can be generated by simulating from a uniform distribution on $[0,1]^2$ and applying the inverse Rosenblatt transformation $T_{c^f_{2,3}}$. \autoref{fig:WU-Plots_MC} displays a sample of size 900 on the (uniform) w-scale and its transformation via $T_{c^f_{2,3}}$ to the (warped) u-scale.
\begin{figure}[!htb]
	\centering
	\includegraphics[width=0.4\linewidth]{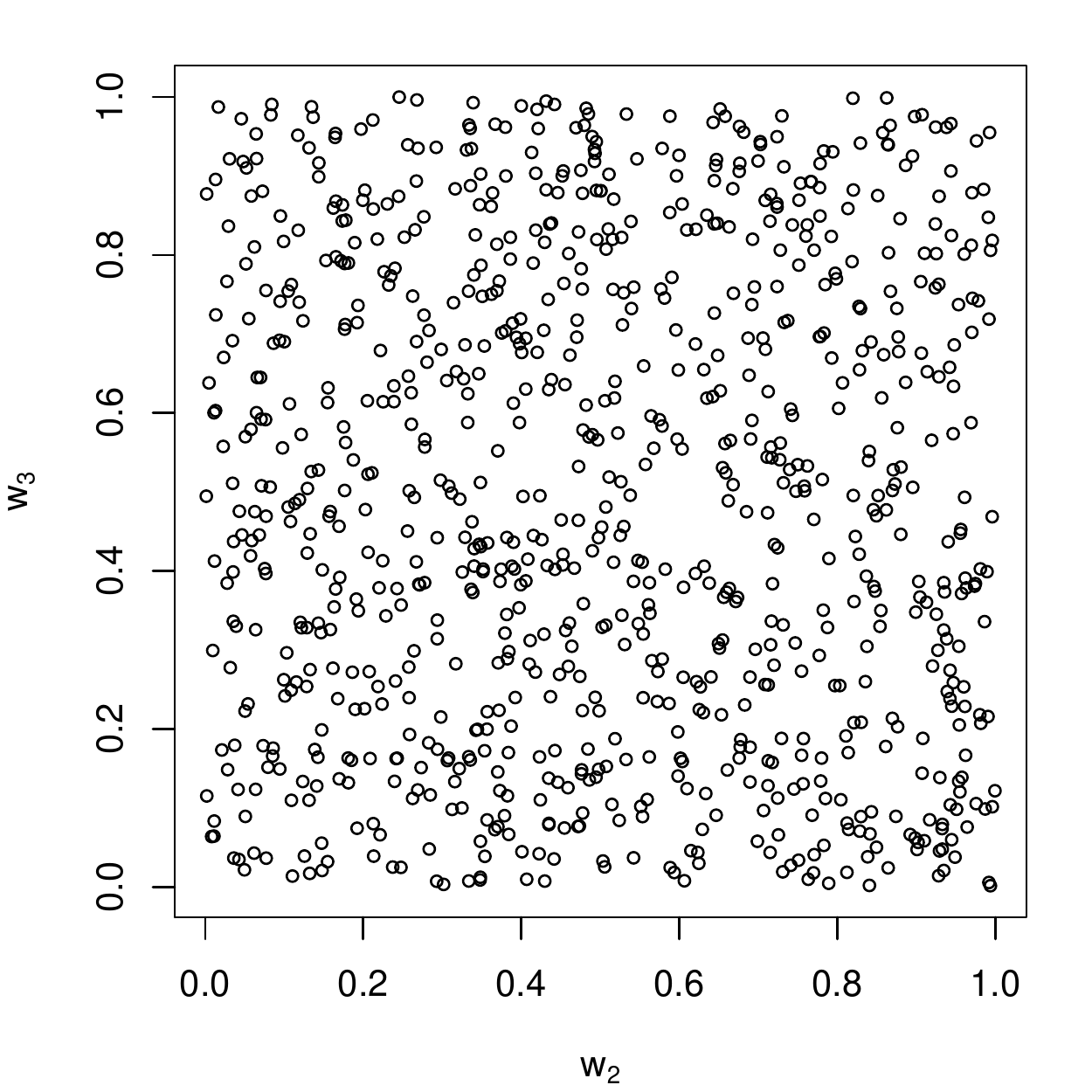}
	\includegraphics[width=0.4\linewidth]{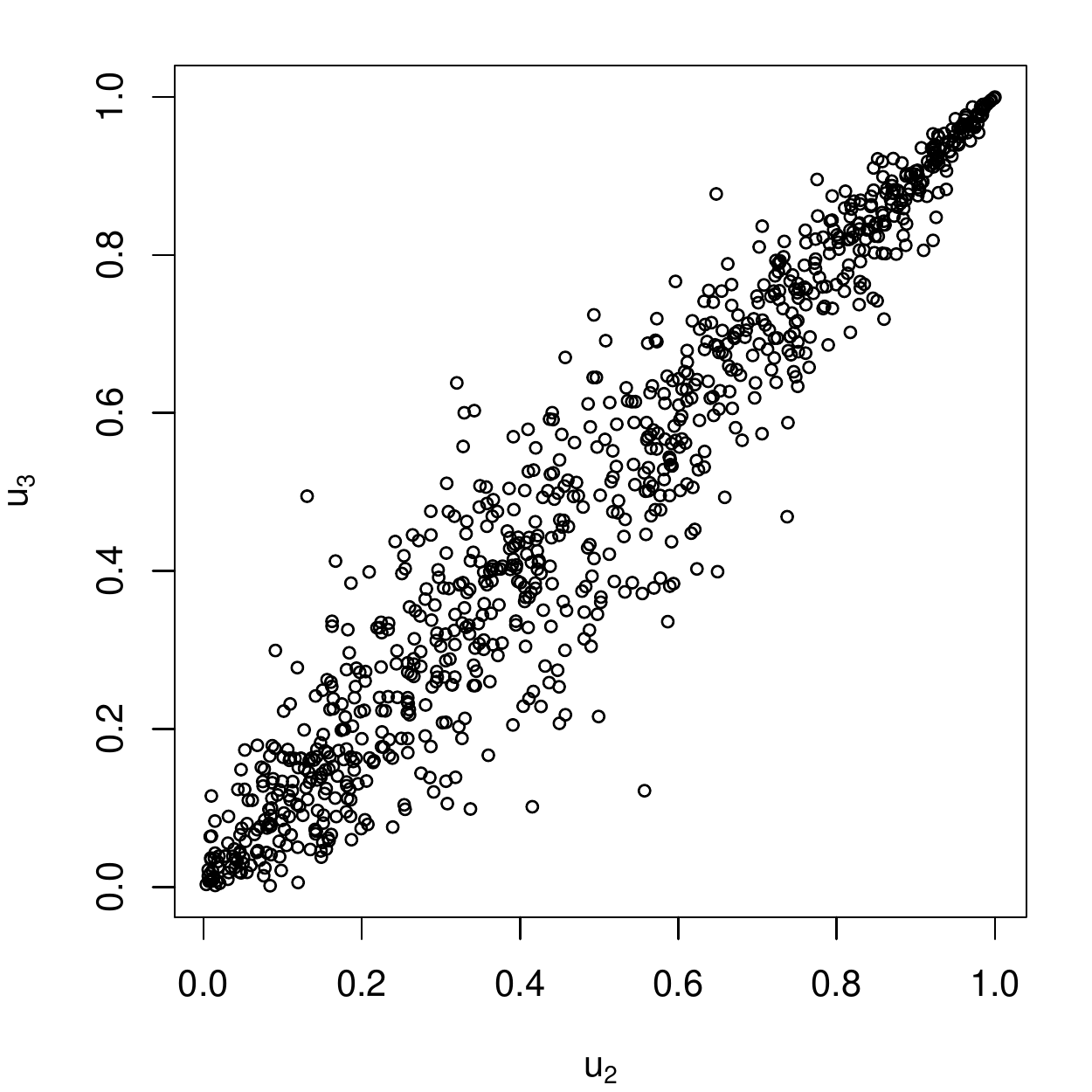}
	\caption{Sample of size 900 from the uniform distribution (left) and corresponding warped sample under transformation $T_{c^f_{2,3}}$, which is a sample from a Gumbel copula with $\theta=6$ (right).}
	\label{fig:WU-Plots_MC}
\end{figure}\\

As mentioned before we do not want our distance measure to be random. This motivates us to introduce the concept of \emph{structured Monte Carlo integration}: Instead of sampling from the uniform distribution on the w-scale, we use a \emph{structured grid} $\Wc$, which is an equidistant lattice on the two-dimensional unit cube\footnote{Since most copulas have an infinite value at the boundary of the unit cube, we usually restrict ourselves to $[\eps,1-\eps]^d$ for a small $\eps>0$.}, and transform it to the warped u-scale by applying the inverse Rosenblatt transformation $T_{c^f_{2,3}}$. \autoref{fig:WU-Plots_aKL} shows an exemplary structured grid with 30 grid points per margin.
\begin{figure}[!htb]
	\centering
	\includegraphics[width=0.4\linewidth]{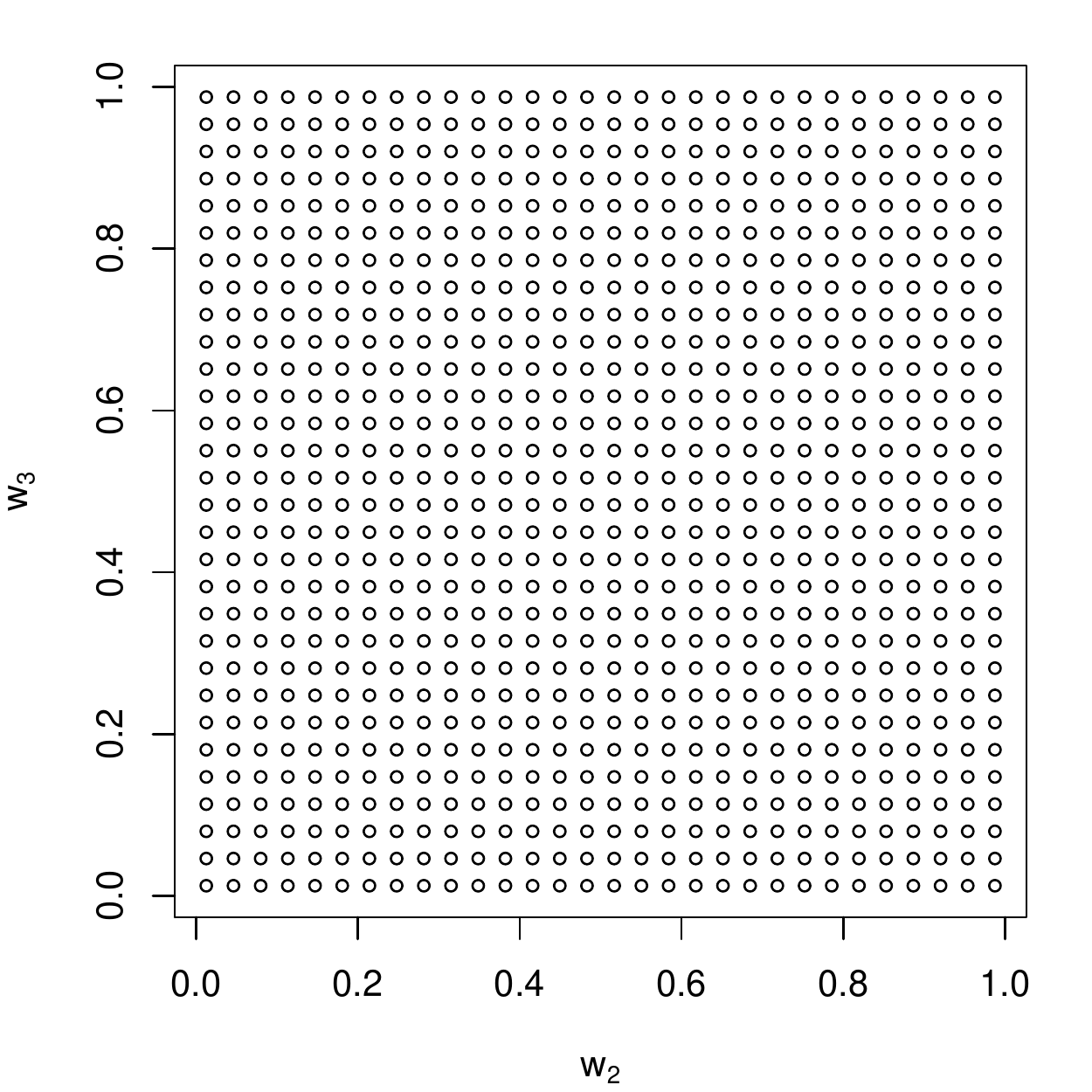}
	\includegraphics[width=0.4\linewidth]{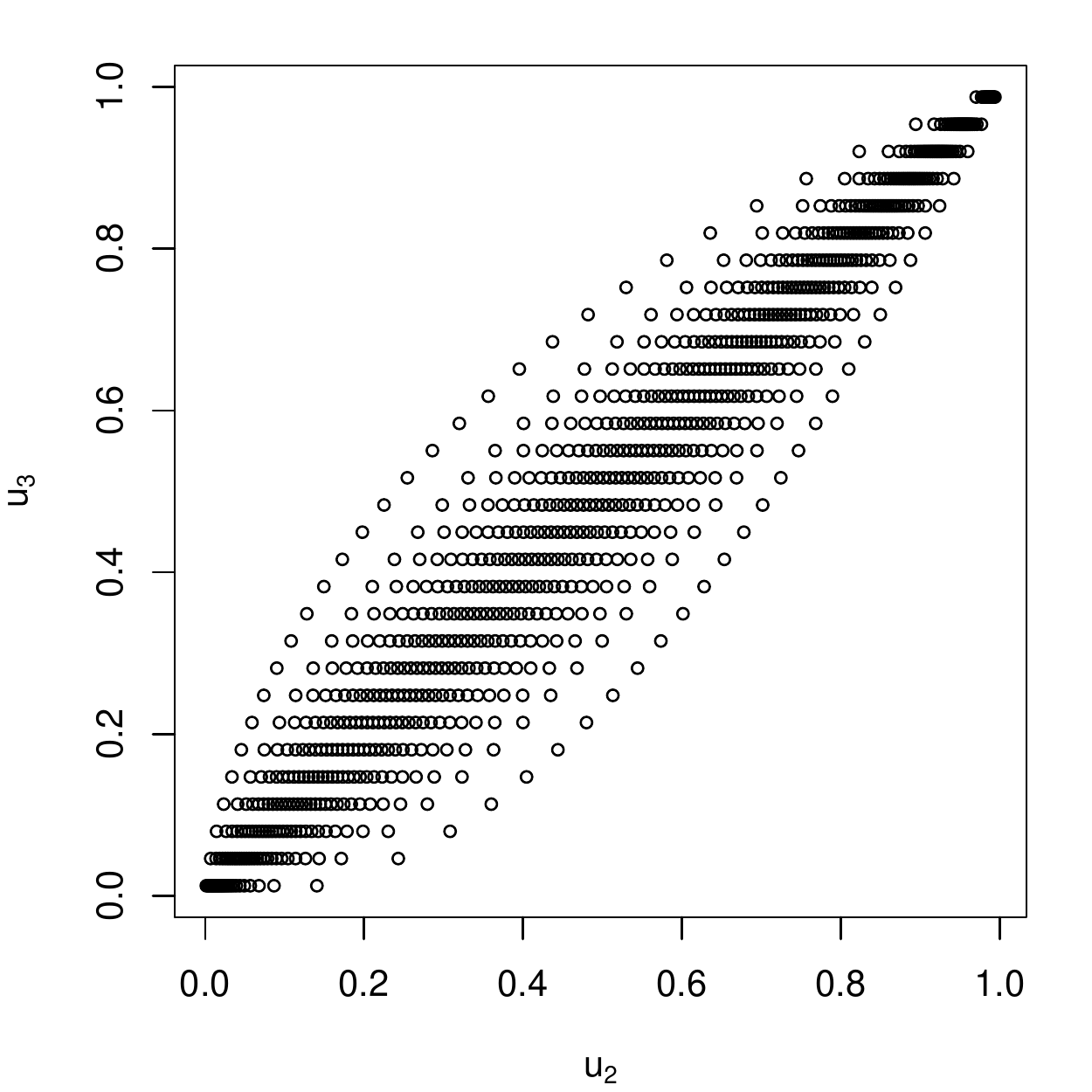}
	\caption{Structured grid with 30 grid points per margin (left) and corresponding warped grid under transformation $T_{c^f_{2,3}}$ (right).}
	\label{fig:WU-Plots_aKL}
\end{figure}\\
Applying this procedure for all grids $\,\Uc_j$, $j=1,\ldots,d-1$, yields the \emph{approximate Kullback-Leibler distance}.

\begin{defi}[Approximate Kullback-Leibler distance]\label{defi:aKL}
	Let $\Rc^f$ and $\Rc^g$ be as described above. Further, let $n\in\Nbb$ be the number of grid points per margin and $\eps>0$. Then, the \emph{approximate Kullback-Leibler distance} ($\aKL$) between $\Rc^{f}$ and $\Rc^{g}$ is defined as
	\begin{equation*}
	\aKL\big(\Rc^{f},\Rc^{g}\big):=\sum_{j=1}^{d-1} \frac{1}{\left|\Gc_j \right| }\sum_{\ub_{(j+1):d}\in\Gc_{j}} \!\!\!\KL\Big(c^{f}_{j|(j+1):d}\left(\,\cdot\,|\ub_{(j+1):d}\right)\!,c^{g}_{j|(j+1):d}\left(\,\cdot\,|\ub_{(j+1):d}\right)\Big),
	\end{equation*}
	where the warped grid $\Gc_{j}\subseteq [0,1]^{d-j}$ is constructed as follows:
	\begin{enumerate}
		\item Define the structured grid $\Wc_{j}:=\left\{\eps,\eps+\Delta,\ldots,1-\eps\right\}^{d-j}$ to be an equidistant discretization of $[0,1]^{d-j}$ with $n$ grid points per margin, where $\Delta:=\frac{1-2\eps}{n-1}$.
		\item The warped grid $\Gc_{j}:=T_{c^{f}_{(j+1):d}}(\Wc_j)$ is defined as the image of $\Wc_j$ under the inverse Rosenblatt transform $T_{c^{f}_{(j+1):d}}$ associated with the copula density $c^{f}_{(j+1):d}$.
	\end{enumerate}
	Note that by construction $\left|\Gc_j \right|=n^{d-j}.$
\end{defi}
\autoref{prop:aKL_limit} shows that the approximate KL distance in fact approximates the true KL distance in the sense that the aKL converges to the KL for $\eps\to0$ and $n\to\infty$. A proof can be found in \autoref{App:aKL_limit}.
\begin{prop}\label{prop:aKL_limit}
	Let $\Rc^f$ and $\Rc^g$ be as described above. Then,
	\[
	\lim_{\eps\to 0}\lim_{n\to\infty}\aKL\big(\Rc^f,\Rc^g\big) = \KL\big(c^f,c^g\big).
	\]
	
\end{prop}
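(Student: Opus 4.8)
The plan is to prove the statement in two stages matching the two iterated limits, handling first the inner limit $n\to\infty$ for fixed $\eps>0$ and then the outer limit $\eps\to0$. Throughout, I would work summand by summand: by \autoref{prop:KLdecomp} (with the $d$th summand dropped as noted) it suffices to show that for each fixed $j\in\{1,\ldots,d-1\}$,
\[
\lim_{\eps\to0}\lim_{n\to\infty}\frac{1}{|\Gc_j|}\sum_{\ub_{(j+1):d}\in\Gc_j}\KL\Big(c^f_{j|(j+1):d}(\,\cdot\,|\ub_{(j+1):d}),c^g_{j|(j+1):d}(\,\cdot\,|\ub_{(j+1):d})\Big)
=\Ebb_{c^f_{(j+1):d}}\Big[\KL\Big(c^f_{j|(j+1):d}(\,\cdot\,|\Ub_{(j+1):d}),c^g_{j|(j+1):d}(\,\cdot\,|\Ub_{(j+1):d})\Big)\Big].
\]
Write $h_j(\ub_{(j+1):d}):=\KL\big(c^f_{j|(j+1):d}(\,\cdot\,|\ub_{(j+1):d}),c^g_{j|(j+1):d}(\,\cdot\,|\ub_{(j+1):d})\big)$; this is a fixed scalar-valued function on $[0,1]^{d-j}$, built from the vine pair-copula densities and $h$-functions, hence continuous on the open cube $(0,1)^{d-j}$ under the standing assumption that all pair-copulas are absolutely continuous (and, in the parametric examples, smooth). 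Now observe that $\Gc_j=T_{c^f_{(j+1):d}}(\Wc_j)$ and the inverse Rosenblatt transform $T_{c^f_{(j+1):d}}$ pushes the uniform distribution on $[0,1]^{d-j}$ forward to $c^f_{(j+1):d}$; therefore the unweighted average of $h_j$ over $\Gc_j$ equals the unweighted average of $(h_j\circ T_{c^f_{(j+1):d}})$ over the equidistant lattice $\Wc_j$ on $[\eps,1-\eps]^{d-j}$. This is exactly a Riemann sum: as $n\to\infty$ with $\eps$ fixed, the equidistant lattice average converges to $\frac{1}{(1-2\eps)^{d-j}}\int_{[\eps,1-\eps]^{d-j}} h_j\big(T_{c^f_{(j+1):d}}(\wb)\big)\diff\wb$, provided the integrand is Riemann-integrable on the compact cube $[\eps,1-\eps]^{d-j}$. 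Since $T_{c^f_{(j+1):d}}$ maps $[\eps,1-\eps]^{d-j}$ into a compact subset of $(0,1)^{d-j}$ on which $h_j$ is continuous (continuity of $T$ itself follows from continuity and strict monotonicity of the conditional cdf's of the vine), the composition is continuous hence Riemann-integrable, and the inner limit is established.

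For the outer limit, I would change variables back via $\ub=T_{c^f_{(j+1):d}}(\wb)$, so that the inner-limit expression becomes $\frac{1}{(1-2\eps)^{d-j}}\int_{A_\eps} h_j(\ub)\,c^f_{(j+1):d}(\ub)\diff\ub$, where $A_\eps:=T_{c^f_{(j+1):d}}([\eps,1-\eps]^{d-j})$ is an increasing family of compact subsets of $(0,1)^{d-j}$ exhausting it as $\eps\to0$. As $\eps\to0$ the prefactor $(1-2\eps)^{-(d-j)}\to1$, and by monotone/dominated convergence $\int_{A_\eps} h_j\,c^f_{(j+1):d}\to\int_{(0,1)^{d-j}} h_j\,c^f_{(j+1):d}=\Ebb_{c^f_{(j+1):d}}[h_j(\Ub_{(j+1):d})]$; the last integral is finite because it is a single summand of the finite quantity $\KL(c^f,c^g)$ in \autoref{prop:KLdecomp}. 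Summing over $j=1,\ldots,d-1$ and invoking \autoref{prop:KLdecomp} once more yields $\lim_{\eps\to0}\lim_{n\to\infty}\aKL(\Rc^f,\Rc^g)=\KL(c^f,c^g)$.

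The main obstacle is the Riemann-sum step, specifically verifying that the averaging over the warped grid $\Gc_j$ — which is a genuinely non-uniform point configuration in $\ub$-space — is legitimately just a Riemann sum in disguise. The clean way around it is precisely the substitution above: pulling everything back to $\wb$-space turns the awkward warped grid into an equidistant lattice and turns the density weight into the Jacobian absorbed by the change of variables, after which standard convergence of multivariate Riemann sums for continuous integrands on a compact box applies. A secondary technical point is justifying that $h_j$ (and $h_j\circ T$) is well-behaved enough — continuity on the open cube and boundedness on each compact $[\eps,1-\eps]^{d-j}$ — which relies only on the absolute continuity assumptions on the pair-copulas already in force and on the standard smoothness of the inverse-Rosenblatt map for vines recalled in \autoref{sec:vines}; integrability near the boundary is never needed because the $\eps$-truncation keeps us strictly inside the cube until the very last limit, where dominated convergence against the finite KL summand closes the argument.
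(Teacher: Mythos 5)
Your proof is correct and follows essentially the same route as the paper's: pull the warped-grid average back to the equidistant lattice on the w-scale, pass to the integral over $[\eps,1-\eps]^{d-j}$ as $n\to\infty$, change variables via the inverse Rosenblatt transform (whose triangular Jacobian produces the density weight $c^f_{(j+1):d}$), and then let $\eps\to 0$. You are in fact slightly more careful than the paper, which omits the normalization factor $(1-2\eps)^{-(d-j)}$ in its intermediate Riemann-sum limit (harmless, since it tends to $1$), and you supply the continuity and monotone-convergence justifications that the paper leaves implicit.
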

In the following applications we use the function \texttt{integrate} for the calculation of the one-dimensional KL. Further, we choose $\eps$ such that the convex hull of the structured grid contains volume $\beta\in(0,1)$, so $\eps:=\frac{1}{2}\big(1-\beta^{\frac{1}{d-j}}\big)$. 
Unless otherwise specified we set $\beta$ to be 95\%.

\begin{exa}[Four-dimensional aKL-example]\label{exa:4d}
	We consider a data set from the Euro Stoxx 50, already used in \cite{brechmann2013risk}. It covers a 4-year period (May 22, 2006 to April 29, 2010) containing 985 daily observations. The Euro Stoxx 50 is a major index consisting of the stocks of 50 large European companies. In this example we consider the following four national indices: the Dutch AEX ($U_1$), the Italian FTSE MIB ($U_2$), the German DAX ($U_3$) and the Spanish IBEX 35 ($U_4$). Fitting a simplified vine to the data yields:
	\[M=\begin{pmatrix}
	1 & 0 & 0 & 0 \\
	4 & 2 & 0 & 0 \\
	2 & 4 & 3 & 0 \\
	3 & 3 & 4 & 4 \\
	\end{pmatrix},\,\,\,
	B=\begin{pmatrix}
	0 & 0 & 0 & 0 \\
	\Fc & 0 & 0 & 0 \\
	\text t &\text  t & 0 & 0 \\
	\text t &\text  t &\text  t & 0 \\
	\end{pmatrix},\]
	\[P^{(1)}=\begin{pmatrix}
	0 & 0 & 0 & 0 \\
	1.01 & 0 & 0 & 0 \\
	0.36 & 0.36 & 0 & 0 \\
	0.91 & 0.89 & 0.88 & 0 \\
	\end{pmatrix}, \,\,\, P^{(2)}=\begin{pmatrix}
	0 & 0 & 0 & 0 \\
	0 & 0 & 0 & 0 \\
	6.34 & 10.77 & 0 & 0 \\
	6.23 & 4.96 & 6.80 & 0 \\
	\end{pmatrix}.\]
	As usual for financial data, most of the pair-copulas selected by the fitting algorithm are t copulas with rather high dependence; only $c_{14;23}$ is modeled as a Frank copula. Now we compute the approximate KL distance between this vine and its nearest Gaussian vine (see \autoref{defi:NGV}) and compare it to the numerically integrated KL distance. The latter limits our example to low dimensions because numerical integration becomes very slow in higher dimensions. Even in four dimensions we have to set the tolerance level of the integration routine \texttt{adaptIntegrate} of the package \texttt{cubature} from its default level of $10^{-5}$ to $10^{-4}$ to obtain results within less than 10 days. Throughout the paper we will also consider examples for $d\geq5$, where numerical integration becomes (almost) infeasible. As a substitute benchmark for the numerically integrated KL distance, we compare our approximated KL values to the corresponding Monte Carlo Kullback Leibler (MCKL) values, where the expectation in \autoref{eq:KL_expectation} is approximated by Monte Carlo integration, i.e.\
	\begin{equation}
	\MCKL\big(c^{f},c^{g}\big):=\frac{1}{N_{\text{MC}}}\sum_{i=1}^{N_{\text{MC}}}\ln \left(\frac{c^{f}(\ub^i)}{c^{g}(\ub^i)}\right),
	\end{equation}
	where $\ub^1,\ldots,\ub^{N_{\text{MC}}}$ are sampled from $c^{f}$. We choose the sample size $N_{\text{MC}}$ to be very large in order to get acceptable low-variance results \citep[cf.][]{do2003fast}.
	
	\autoref{tab:aKL} displays the approximate Kullback-Leibler distance between the fitted vine and its nearest Gaussian vine for different values of $\beta$ and $n$ together with the corresponding computational time (in hours). We further present the numerically and Monte Carlo integrated KL distances. In order to facilitate comparability, for each value of $\beta$ we compute the integrals on the corresponding domain of integration with volume $\beta$.
	
	\begin{table*}[!htb]
		\centering
		\begin{tabular}{|c|l|ccc|c|cc|}
			\cline{3-8}
			\multicolumn{2}{c|}{}& \multicolumn{3}{c|}{aKL} & Numeric &\multicolumn{2}{c|}{MCKL}\\
			\hline
			$\beta$ & & $n=10$ & $n=20$ & $n=50$  & tol=$10^{-4}$ & $N_{\text{MC}}=10^5$ & $N_{\text{MC}}=10^6$ \\
			\hline
			\multirow{2}{*}{95\%} & value&$0.135$ & $0.095$ & $0.076$ &   0.077 & 0.076 & 0.079 \\
			& time [h] & 0.004 & 0.030 & 0.582 & 20.3 & 0.005 & 0.061 \\
			\hline
			\multirow{2}{*}{99\%} & value & 0.311 & 0.170 & 0.107 & 0.082 & 0.085 & 0.081  \\
			& time [h] & 0.006 & 0.034 & 0.609 & 33.4 & 0.006 & 0.063\\
			\hline
			\multirow{2}{*}{100\%} & value &&& & 0.084 & 0.084 & 0.084 \\
			& time [h] &&& &  99.4 & 0.005 & 0.058 \\
			\hline
		\end{tabular}
		\caption{Approximate, numerically integrated and Monte Carlo integrated KL distances for different parameter settings with corresponding computational times (in hours).}
		\label{tab:aKL}
	\end{table*}

	We see that for an increasing number of marginal grid points $n$ the value of the approximate KL distance gets closer to the value obtained by numerical integration. We further observe that in this example the value of the numerically integrated KL distance does not change considerably when the integral is computed on the constrained domain of integration with volume $\beta$. We expect the computational time of the aKL to increase cubically since the number of univariate KL evaluations is $\sum_{j=1}^3|\Gc_{j}|=n^3+n^2+n$. This is empirically validated by the observed computational times. 
	Further, we see that even for larger values of $n$ the $\aKL$ is still considerably faster than classical numerical integration. Concerning the Monte Carlo integrated KL distances in this example, we observe that the values still vary notably between $N_{\text{MC}}=10^5$ and $N_{\text{MC}}=10^6$. Thus, for the remainder of the paper we will use $N_{\text{MC}}=10^6$ in order to get rather reliable results.
\end{exa}
We can conclude that the approximate KL distance is a valid tool to estimate the Kullback-Leibler distance. However, similar to numerical integration it suffers from the curse of dimensionality, causing computational times to increase sharply when a certain precision is required or dimension increases. The number of evaluation points $\left|\Gc_j \right|$ increases exponentially in $d$, making calculations infeasible for higher dimensions. This motivates us to thin out the grids $\Gc_{j}$ in a way that considerably reduces the number of grid points, while still producing sound results. We have found that the restriction to diagonals in the unit cube fulfills these requirements reasonably well. Of course, with this modification we cannot hope for the resulting distance measure to still approximate the KL distance but we will see that in applications it reproduces the behavior of the original KL distance remarkably well.

\subsection{Diagonal Kullback-Leibler distance}\label{subsec:dKL}

In order to illustrate the idea behind the \textit{diagonal Kullback-Leibler distance} we continue our example from \autoref{subsec:aKL}. \autoref{fig:WU-Plots_dKL} shows the structured and warped grids used for the aKL (gray circles). Additionally, the diagonal grid points are highlighted by filled diamonds.

\begin{figure}[!htb]
	\centering
	\includegraphics[width=0.4\linewidth]{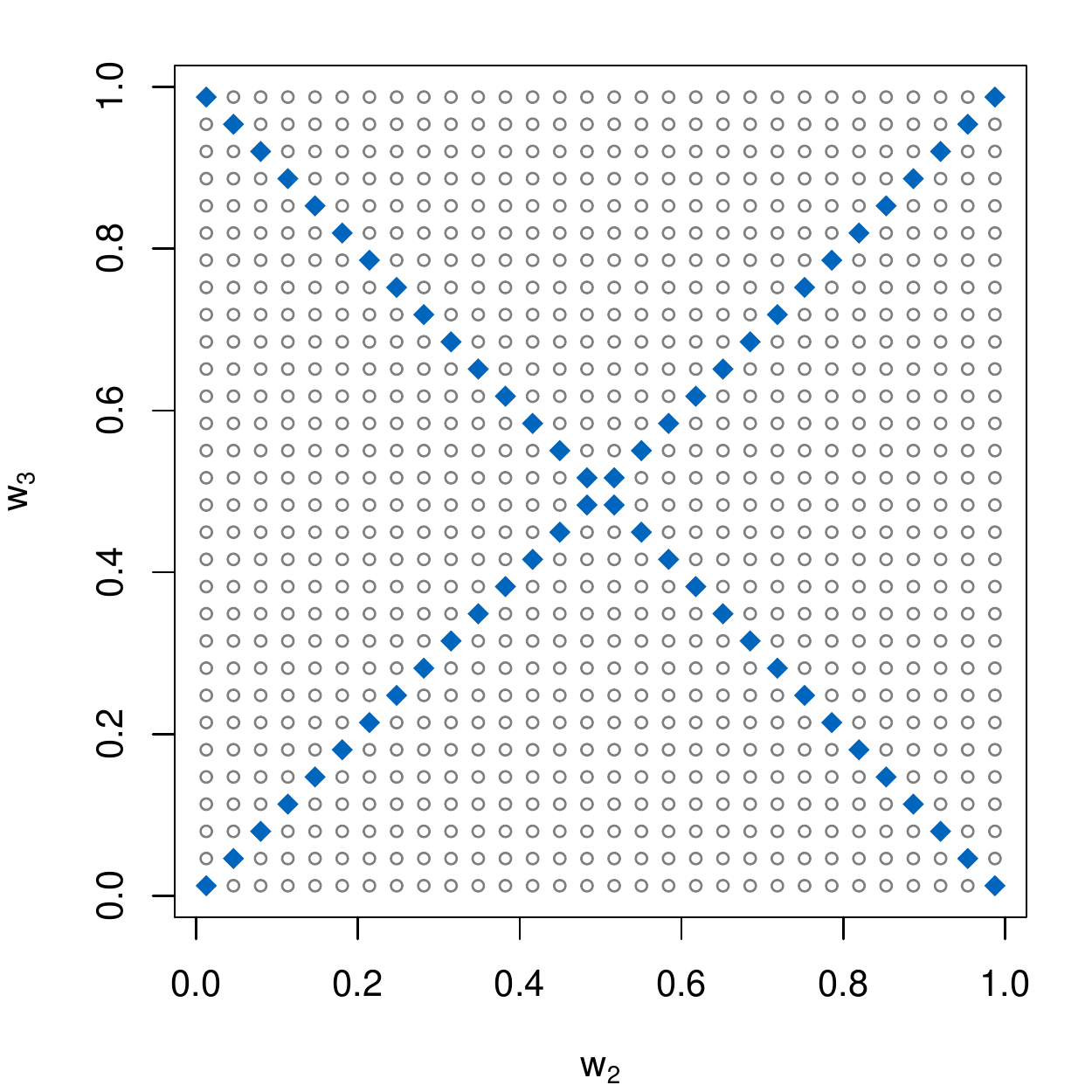}
	\includegraphics[width=0.4\linewidth]{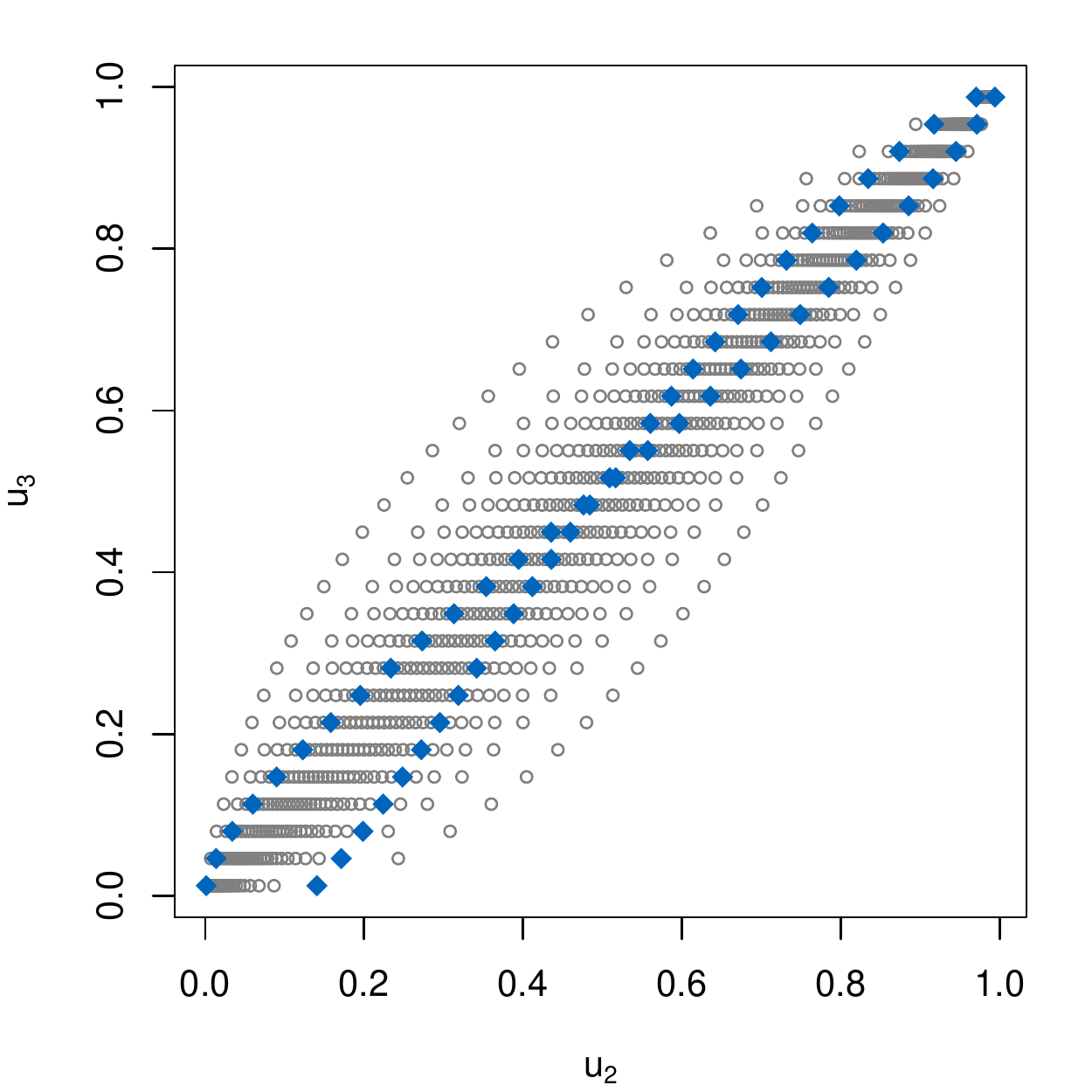}
	\caption{Structured grid with highlighted diagonals consisting of 30 evaluation points (left) and corresponding warped grid under transformation $T_{c^f_{2,3}}$ (right).}
	\label{fig:WU-Plots_dKL}
\end{figure}
The idea is now to reduce the evaluation grids $\Uc_j$ to the diagonal grids in order to define a distance measure related to the original KL distance with the advantage of reduced computational costs. For this, we formally define the sets of diagonals and warped discretized diagonals.

\begin{defi}[Diagonals and warped discretized diagonals]\label{defi:diagonals}
	For $j=1,\ldots,d-1$, we define the \emph{set of diagonals} in $[0,1]^{d-j}$:
	\[
	D_j:=\left\lbrace \left\lbrace \rb+t \vb(\rb)\mid t\in[0,1] \right\rbrace\, \middle|\,\rb\in\left\lbrace 0,1 \right\rbrace^{d-j}  \right\rbrace,
	\]
	where $\vb(\rb)=\left(v_1(\rb),\ldots,v_{d-j}(\rb)\right)'$ with
	\[ v_i(\rb):=
	\begin{cases}
	1&\text{if }\,r_i=0\\
	-1&\text{if }\,r_i=1
	\end{cases}, \quad i=1,\ldots,d-j,\] is the direction vector corresponding to the corner point $\rb$. Note that the set of diagonals $D_j$ only contains $2^{d-j-1}$ elements since every diagonal is implied by two corner points (e.g.\ the diagonals $\{(0,\ldots,0)'+t(1,\ldots,1)'\mid t\in[0,1] \}$ and
	$\{(1,\ldots,1)'+t(-1,\ldots,-1)'\mid t\in[0,1] \}$ coincide). Further, let $D_{j,1},\ldots,D_{j,2^{d-j-1}}$ be an arbitrary ordering of the $2^{d-j-1}$ diagonals. We define the $k$th \emph{discretized diagonal} on the w-scale as $\Dc^w_{j,k}:=D_{j,k} \cap \Wc_{j}$, where $\Wc_{j}$ is the structured grid in $[0,1]^{d-j}$ defined in \autoref{defi:aKL}, such that it contains $n$ grid points (cf.\ left panel of \autoref{fig:WU-Plots_dKL}). Finally, the $k$th \emph{warped discretized diagonal} on the u-scale is defined as $\Dc^u_{j,k}:=T_{c^{f}_{(j+1):d}}\big(\Dc^w_{j,k}\big)$, where $T_{c^{f}_{(j+1):d}}$ is defined as in \autoref{defi:aKL} (cf.\ right panel of \autoref{fig:WU-Plots_dKL}).
\end{defi}

Now, we can define the diagonal Kullback-Leibler distance by using the set of warped discretized diagonals
\[
\Dc_j^u:=\bigcup_{k=1}^{2^{d-j-1}}\Dc_{j,k}^u\]
as evaluation grid $\Uc_j$.

\begin{defi}[Diagonal Kullback-Leibler distance]\label{defi:dKL}
	Let $\Rc^f$, $\Rc^g$ and $\Dc_j^u$ be as described above. Then, the \emph{diagonal Kullback-Leibler distance} $\dKL$ between $\Rc^{f}$ and $\Rc^{g}$ is defined as
	\begin{equation*}
	\dKL\big(\Rc^{f},\Rc^{g}\big):=\sum_{j=1}^{d-1} \frac{1}{\big| \Dc_j^u\big|} \sum_{\ub_{(j+1):d}\in \Dc^u_{j}} \\ \KL\Big(c^{f}_{j|(j+1):d}\left(\,\cdot\,|\ub_{(j+1):d}\right),c^{g}_{j|(j+1):d}\left(\,\cdot\,|\ub_{(j+1):d}\right)\Big),
	\end{equation*}
	where $\big| \Dc_j^u\big|=n\cdot 2^{d-j-1}$.
\end{defi}

\begin{rem}\label{rem:dKL}
	Similar to \autoref{prop:aKL_limit} one can show (see \aref{App:dKL_approx}) that for each of the $2^{d-j-1}$ diagonals $\Dc_{j,k}^u$ it holds
	\begin{multline*}
	\lim_{\eps\to 0}\lim_{n\to\infty} \frac{1}{n}\sum_{\ub_{(j+1):d}\in \Dc^u_{j,k}}\KL\left(c^{f}_{j|(j+1):d}\left(\,\cdot\,|\ub_{(j+1):d}\right),c^{g}_{j|(j+1):d}\left(\,\cdot\,|\ub_{(j+1):d}\right)\right) =\\ \frac{1}{\sqrt{d-j}} \! \int\limits_{\ub_{(j+1):d}\in D_{j,k}^u}\!\!{\KL\left(c^{f}_{j|(j+1):d}\left(\,\cdot\,|\ub_{(j+1):d}\right),c^{g}_{j|(j+1):d}\left(\,\cdot\,|\ub_{(j+1):d}\right)\right)c_{(j+1):d}^f\left(\ub_{(j+1):d}\right)\diff \ub_{(j+1):d}},
	\end{multline*}
	where $D_{j,k}^u:=T_{c_{(j+1):d}^f}(D_{j,k})$. Hence, the diagonal Kullback-Leibler distance can be interpreted as a sum of scaled approximated line integrals over weighted univariate KL distances between conditional densities, which is exactly the integrand appearing in \autoref{prop:KLdecomp}. Having the infeasibility of the aKL in higher dimensions in mind, the reduction to points on lines seems to be a good choice in order to reduce the approximation of multivariate integrals to one-dimensional ones. We choose the warped diagonals as these lines since they on the one hand contain points with high density values due to the warping and on the other hand each let all components of the conditioning vector take values on the whole range from 0 to 1. Since in practice vine models tend to differ most in the tails of the distributions, we increase the concentration of evaluation points in the tails by transforming the discretized diagonal with the method described at the end of \aref{App:tail_trafo}.
\end{rem}
The following examples are supposed to illustrate that the dKL is a reasonable distance measure between vine copulas.

\subsubsection*{Applications of the dKL}

\begin{exa}[\autoref{exa:4d} continued]\label{exa:4d_cont}
	We continue \autoref{exa:4d} and apply the dKL to measure the distance between the fitted four-dimensional vine and its nearest Gaussian vine using different numbers of grid points $n$ per diagonal and $\beta=95\%$ as usual. The results and computational times (in seconds) are displayed in \autoref{tab:ESdKL}.
	\begin{table}[!htbp]
		\centering
		\begin{tabular}{|l|cccccc|}
			\hline
			$n$ & $10$ & $20$ & $50$ & $100$ & $1{,}000$ & $10{,}000$ \\
			\hline
			value & 0.123 & 0.117 & 0.115 & 0.114 & 0.113 & 0.113\\
			time [s] & 0.9 & 1.7 & 3.9 & 7.7 & 89 & 964\\
			\hline
		\end{tabular}
		\caption{dKL values and corresponding computational times (in seconds) for the four-dimensional Euro Stoxx 50 example.}
		\label{tab:ESdKL}
	\end{table}
	
	We observe that the dKL values seem to converge quite fast and are already quite close to their limit for small $n$. Of course, we cannot expect them to converge to the original KL distance, but we see that the order of magnitude is the same as the values in \autoref{exa:4d} that were calculated by numerical integration. Concerning computational times, the dKL merely needs a couple of seconds to produce reasonable results which is a vast improvement to the computational times of the aKL and MCKL which were in the order of hours. As expected, the computational times of the dKL grow linearly in $n$.
\end{exa}

In order to assess the aptitude of the dKL for measuring the distance between vine copulas, we conduct several plausibility checks in the following examples. Since numerical integration is not practicable for these examples ($d\geq 5)$, we compare our dKL values to the corresponding MCKL values.

\begin{exa}[Plausibility checks]\label{exa:dKL_Plausi}
	For the first plausibility check, we consider five-dimensional t copulas with $\nu$ degrees of freedom (ranging from 3 to 30). Those can be specified as vine copulas with structure matrices $M=D_5$, family matrices $B$ containing only t copulas, Kendall's $\tau$ matrices $K=K_5(0.5)$ and degrees of freedom matrix $P^{(2)}=V_5(\nu)$, where
	\begin{equation}\label{eq:D_d}
	D_d:=\begin{pmatrix}
	1 &  & &  & & &\\
	d & 2 & & & & &\\
	d-1 & d & \ddots& & & & \\
	\vdots & d-1 & \ddots & \ddots&  & &\\
	4 & \vdots & \ddots &  \ddots& d-2& & \\
	3 & 4 & & \ddots& d& d -1& \\
	2 & 3 & 4 & \cdots& d-1 & d & d \\
	\end{pmatrix},
	\end{equation}
	\begin{equation}\label{eq:K_d}
	K_d(\tau)=\left(k_{i,j}(\tau)\right)_{i,j=1}^d,
	\end{equation}
	\[
	V_d(\nu)=\left(v_{i,j}(\nu)\right)_{i,j=1}^d.
	\]
	Here, $k_{i,j}(\tau):=\left(\frac{1}{2}\right)^{d-i}\cdot\tau\cdot 1_{\left\lbrace i>j\right\rbrace}$ and $v_{i,j}(\nu):=\left(\nu+d-i\right)\cdot 1_{\left\lbrace i>j\right\rbrace}$, where $1_{\left\lbrace \cdot \right\rbrace }$ denotes the indicator function. Note that the structure encoded in $D_d$ is also known as a D-vine structure \citep[see][]{aasczado} and the parameter matrix $P^{(1)}$ is uniquely determined by the Kendall's $\tau$ matrix since all pairs are bivariate t copulas. \autoref{tab:dKL_Plausi1} (left table) displays the diagonal ($n=10$) and Monte Carlo ($N_{\text{MC}}=10^6$) KL distances between these t copulas and their nearest Gaussian vines.
	\begin{table*}[!htbp]
		\centering
		\begin{tabular}{|r|r|r|r|}
			\hline
			$\nu$ & dKL & MCKL & ratio \\
			\hline
			3 & 0.857 & 0.374 & 2.29 \\
			5 & 0.376 & 0.162 & 2.32 \\
			7 & 0.209 & 0.091 & 2.30 \\
			10 & 0.109 & 0.047 & 2.31 \\
			15 & 0.051 & 0.023 & 2.26 \\
			20 & 0.029 & 0.013 & 2.20 \\
			25 & 0.019 & 0.008 & 2.25 \\
			30 & 0.013 & 0.006 & 2.23 \\
			\hline
		\end{tabular}
		\,
		\begin{tabular}{|r|r|r|r|}
			\hline
			$\tau$ & dKL & MCKL & ratio \\
			\hline
			$-0.7$ & 4.702 & 3.226 & 1.46 \\
			$-0.5$ & 2.106 & 1.431 & 1.47 \\
			$-0.3$ & 0.740 & 0.473 & 1.56 \\
			$-0.1$ & 0.077 & 0.050 & 1.54 \\
			0.1 & 0.067 & 0.048 & 1.41 \\
			0.3 & 0.561 & 0.423 & 1.33 \\
			0.5 & 1.740 & 1.262 & 1.38 \\
			0.7 & 4.590 & 2.982 & 1.54 \\
			\hline
		\end{tabular}
		\,
		\begin{tabular}{|r|r|r|r|}
			\hline
			family & dKL & MCKL & ratio \\
			\hline
			$\Nc$ &  0.369 & 0.205 & 1.80 \\
			$\Cc$ & 2.987 & 1.780 & 1.68 \\
			$s\Cc$ & 0.322 & 0.158 & 2.04 \\
			$\Jc$ & 0.483 & 0.249 & 1.94 \\
			\hline
		\end{tabular}
		\caption{Left table: dKL ($n=10$) and MCKL ($N_{\text{MC}}=10^6$) values between five-dimensional t copulas with $K=K_5(0.5)$ and $P^{(2)}=V_5(\nu)$ and their nearest Gaussian vines. Middle table: dKL and MCKL values between five-dimensional t copulas with $K=K_5(\tau)$ and $P^{(2)}=V_5(3)$ and their nearest Gaussian vines. Right table: dKL and MCKL values between a five-dimensional Gumbel D-vine and D-vines with the same Kendall's $\tau$ matrix constructed using one copula family only. The third column contains the ratio between dKL and MCKL.}
		\label{tab:dKL_Plausi1}
	\end{table*}
	
	We see that the diagonal KL distance decreases as the degrees of freedom increase. This is very plausible since the t copula converges to the Gaussian copula as $\nu\rightarrow\infty$. Further, while their values are not on the same scale, we observe that the dKL and MCKL values behave similarly. This can be seen by the fact that in this example the ratio between both values ranges only between $2.20$ and $2.32$, where some of the fluctuation can be explained by the randomness of the MCKL. Further, the fact that the scale of the dKL differs from the one of the KL is no real drawback since the scale of the KL distance itself is not particularly meaningful. Regarding computational times we note that in this five-dimensional example the average time for computing a MCKL value was 125 seconds, while the average computation of the dKL only took 3 seconds.
	
	In the second plausibility check we also deal with five-dimensional t copulas decomposed as above. However, in this scenario the degrees of freedom are fixed to be equal to 3 and the value for $\tau$ in $K_5(\tau)$ is ranging between $-0.7$ and $0.7$. As a reference vine we use a t copula with Kendall's $\tau$ matrix $K_5(0)$ and degrees of freedom $\nu=3$. The dKL and MCKL distances between the resulting eight t copulas and the reference vine is shown in \autoref{tab:dKL_Plausi1} (middle table).\\
	
	Both dKL and MCKL values grow with increasing absolute value of $\tau$ as we would expect from the true KL distance. As before, the rank correlation between dKL and MCKL values is equal to 1, the ratio is nearly constant and the dKL is computed 40 times faster than the MCKL.
	
	In the third plausibility check we consider five-dimensional Gumbel vines (i.e.\ every pair-copula is a bivariate Gumbel copula having upper tail dependence) with the same structure matrix $D_d$ and Kendall's $\tau$ matrix $K_5(0.5)$. In \autoref{tab:dKL_Plausi1} (right table) we compare this vine to its nearest Gaussian vine and other vines constructed similarly using one copula family only but retaining the same dependence in terms of the Kendall's $\tau$ matrix. As other copula families we choose the Clayton copula ($\Cc$) exhibiting lower tail dependence, the survival Clayton copula ($s\Cc$) with upper tail dependence and the Joe copula ($\Jc$) having lower tail dependence.    As the difference between upper and lower tail dependent pair-copulas is large, we expect the highest distance value for the Clayton vine. Conversely, the distance to the survival Clayton vine should be the lowest.
	The diagonal KL distance also passes this plausibility check assigning the largest distance to the Clayton vine, a small distance to the survival Clayton vine and medium distances to the Joe and Gaussian vines. Again, the ratio between dKL and MCKL values varies only little and the dKL is still roughly 40 times faster than the MCKL regarding computational time.
\end{exa}
The previous plausibility checks have shown that the diagonal Kullback-Leibler distance is a reasonable and fast distance measure for five-dimensional vines. Since the main motivation of the reduction to diagonals was reduced computational complexity, we now turn to higher dimensional examples.

\begin{exa}[Performance in different dimensions]\label{exa:dim_ana1}
	In order to assess the performance of the diagonal KL distance regarding computational time in different dimensions, we again make use of the Euro Stoxx 50 data. We take the 12 German stocks (with ticker symbols ALV, BAS, BAYN, DAI, DB1, DBK, DTE, EOAN, MUV2, RWE, SIE, SAP, corresponding to ${U_1,\ldots,U_{12}}$) 
	and fit vines to the first $d$ variables ($d=3,\ldots,12$). We display the dKL distance ($n=10$) between these vines and their nearest Gaussian vines with corresponding computational times (in seconds) in \autoref{tab:ESdKL_dim}. Again, we also present the approximated KL values using Monte Carlo integration ($N_{\text{MC}}=10^6$) and the ratio between dKL and MCKL values.
	
	While we observe that the dKL is exceptionally fast in low dimensions we note that computational times more than double when moving up one dimension. This is reasonable since the total number of diagonals in all evaluation grids is equal to
	\begin{equation}
	\sum_{j=1}^{d-1}|\Dc_j^u|=\sum_{j=1}^{d-1}2^{d-j-1}=2^{d-1}-1
	\end{equation}
	and thus grows exponentially in $d$. Further, the evaluations of the conditional copula densities become more costly in higher dimensions, which can also be seen by the fact that the computational times for the MCKL increase even though the number of evaluations $N_{\text{MC}}$ stays constant. Comparing the computational times of the dKL and MCKL one notices that the dKL is considerably faster than the MCKL in lower dimensions. Only in dimensions 10 and higher it looses this competitive advantage. Considering the ratio between dKL and MCKL values it seems that the dKL values increase slightly faster with the dimension $d$ than the MCKL values.
		\begin{table*}[t]
			\centering
			\begin{tabular}{|l|cccccccccc|}
				\hline
				$d$ & 3 & 4 & 5 & 6 & 7 & 8 & 9 & 10 & 11 & 12 \\
				\hline
				dKL & 0.076 & 0.109 & 0.178 & 0.249 & 0.297 & 0.459 & 0.529 & 0.657 & 0.670 & 0.839 \\
				time [s] & 0.4 & 1 & 3 & 8 & 22 & 61 & 145 & 342 & 788 & 1846 \\
				\hline
				MCKL & 0.048 & 0.075 & 0.098 & 0.140 & 0.172 & 0.211 & 0.240 & 0.287 & 0.322 & 0.354 \\
				time [s] & 25 & 43 & 87 & 129 & 158 & 174 & 235 & 279 & 408 & 448 \\
				\hline
				ratio & 1.57 & 1.45 & 1.81 & 1.78 & 1.73 & 2.17 & 2.2 & 2.29 & 2.08 & 2.37 \\
				\hline
			\end{tabular}
			\caption{dKL values ($n=10$) and MCKL values ($N_{\text{MC}}=10^6$) with corresponding computational times (in seconds) for vines with different dimensions based on the stock exchange data. The fifth row contains the ratio between dKL and MCKL.}
			\label{tab:ESdKL_dim}
		\end{table*}
		
\end{exa}
The preceding examples suggest that with the dKL distance we have found a valid distance measure between vines with reasonable computational times for up to ten dimensions. However, we are still interested in finding a distance measure computable in dimensions of order 30 to 50. To achieve this, the number of grid points should not depend on the dimension of the evaluation grid, implying a constant number of grid points. Hence, we choose only one of the $2^{d-j-1}$ warped discrete diagonals in $\Dc_j^u$ to be the evaluation grid. While this may seem like a very severe restriction (with  the curse of dimensionality in mind), two heuristic observations justify this approach. On the one hand we observe that most of the $2^{d-j-1}$ diagonals contain many grid points with density values close to zero while there is always one diagonal whose points have very large density values. On the other hand we will see that the properties of the distance measure using only this single diagonal for the evaluation grid still pass the plausibility checks with values behaving closely to those of the dKL and the MCKL.

\subsection{Single diagonal Kullback-Leibler distance}\label{subsec:sdKL}

In order to find the one diagonal whose grid points have the highest density values we introduce a \emph{weighting measure} that assigns a positive real number to a diagonal depending on how the density behaves on it. The higher the density values are the more weight the corresponding diagonal obtains.
\begin{defi}[Diagonal weighting measure]\label{defi:dwm}
	Assume we can parametrize a diagonal $D\subseteq [0,1]^{d}$ (on the u-scale) by the mapping $\gammab\colon [0,1]\to[0,1]^d$. Let $c\colon[0,1]^{d}\rightarrow\left[0,\infty\right)$ be a copula density. Then, we define
	\begin{equation}\label{eq:dwm}
	\lambda_c(D):=\int_{\xib\in D}c(\xib)\diff\xib=\int_{t\in[0,1]}c(\gammab(t))\left\|\dot{\gammab}(t)\right\|\diff t
	\end{equation}
	to be the \emph{weight of $D$ under $c$}, where $\dot{\gammab}$ is the vector of componentwise derivatives of $\gammab$.
\end{defi}

We now define the \emph{single diagonal Kullback-Leibler distance}, which is a version of the diagonal Kullback-Leibler distance that only evaluates the diagonal with the highest weight.
\begin{defi}[Single diagonal Kullback-Leibler distance]\label{defi:sdKL}
	Let $\Rc^f$, $\Rc^g$ be as before and $\Dc_{j,k^*_j}^u$ with $|\Dc_{j,k^*_j}^u|=n$ and $k^*_j:=\argmax_k{\lambda_{c^f_{(j+1):d}}}\big(D^u_{j,k}\big)$ be a discretization of the corresponding diagonal with the highest weight according to $\lambda_{c^f_{(j+1):d}}$,
	${j=1,\ldots,d-1}$. Then, the \emph{single diagonal Kullback-Leibler distance} ($\sdKL$) between $\Rc^{f}$ and $\Rc^{g}$ is defined as
	\begin{equation*}
	\sdKL\big(\Rc^{f},\Rc^{g}\big):=\sum_{j=1}^{d-1} \frac{1}{\big|\Dc_{j,k^*_j}^u\big|} \sum_{\ub_{(j+1):d}\in \Dc_{j,k^*_j}^u} \KL\left(c^{f}_{j|(j+1):d}\left(\,\cdot\,|\ub_{(j+1):d}\right),c^{g}_{j|(j+1):d}\left(\,\cdot\,|\ub_{(j+1):d}\right)\right).
	\end{equation*}
\end{defi}

\begin{rem}
	From \autoref{rem:dKL} we know that the single diagonal Kullback-Leibler distance approximates a scaled line integral over weighted univariate KL distances between conditional densities along the diagonal with the highest weight.
	
	To find this diagonal we actually would have to calculate the integral of $c^f_{(j+1):d}$ over each of the $2^{d-j-1}$ diagonals. In practice, this may be infeasible for high dimensions. Therefore, we propose a more sophisticated method to find a candidate for the diagonal with the highest weight. Similar to the hill-climbing algorithm used to find optimal graph structures in Bayesian networks \citep[see][]{tsamardinos2006max}, we choose a starting value in form of a certain diagonal implied by the vine's unconditional dependencies and look in the ``neighborhood'' of this diagonal for another diagonal with higher weight. This procedure is repeated until a (local) maximum is found. The two procedures of finding a suitable starting diagonal and locally searching for better candidates are described in \autoref{App:weight}.
\end{rem}

In the following we continue the plausibility checks from \autoref{exa:dKL_Plausi} to demonstrate that the restriction to a single diagonal is still enough for the resulting distance measure to generate reasonable results.

\begin{exa}[Plausibility checks (\autoref{exa:dKL_Plausi} continued)]\label{exa:sdKL_Plausi}
	
	\autoref{tab:sdKL_Plausi1} repeats the three plausibility checks of \autoref{exa:dKL_Plausi}. The resulting sdKL values obviously also pass these tests resembling the behavior of the MCKL values quite closely with relatively steady sdKL/MCKL-ratios. Evaluating at only one diagonal in each grid reduces computational times even more such that the sdKL is roughly 180 times faster than the MCKL.
	\begin{table*}[!h]
		\centering
		\begin{tabular}{|r|r|r|r|}
			\hline
			$\nu$ & sdKL & MCKL & ratio \\
			\hline
			3 & 0.754 & 0.374 & 2.02 \\
			5 & 0.330 & 0.162 & 2.04 \\
			7 & 0.184 & 0.091 & 2.03 \\
			10 & 0.097 & 0.047 & 2.06 \\
			15 & 0.046 & 0.023 & 2.04 \\
			20 & 0.026 & 0.013 & 1.98 \\
			25 & 0.017 & 0.008 & 2.04 \\
			30 & 0.012 & 0.006 & 2.03 \\
			\hline
		\end{tabular}
		\,
		\begin{tabular}{|r|r|r|r|}
			\hline
			$\tau$ & sdKL & MCKL & ratio \\
			\hline
			$-0.7$ & 7.534 & 3.226 & 2.34 \\
			$-0.5$ & 3.100 & 1.431 & 2.17 \\
			$-0.3$ & 0.773 & 0.473 & 1.63 \\
			$-0.1$ & 0.053 & 0.050 & 1.05 \\
			0.1 & 0.157 & 0.048 & 3.30 \\
			0.3 & 1.322 & 0.423 & 3.12 \\
			0.5 & 3.226 & 1.262 & 2.56 \\
			0.7 & 6.193 & 2.982 & 2.08 \\
			\hline
		\end{tabular}
		\,
		\begin{tabular}{|r|r|r|r|}
			\hline
			family & sdKL & MCKL & ratio \\
			\hline
			$\Nc$ & 0.394 & 0.205 & 1.92 \\
			$\Cc$ & 3.557 & 1.780 & 2.00 \\
			$s\Cc$ & 0.421 & 0.158 & 2.66 \\
			$\Jc$ & 0.576 & 0.249 & 2.32 \\
			\hline
		\end{tabular}
		\caption{Left table: sdKL ($n=10$) and MCKL ($N_{\text{MC}}=10^6$) values between five-dimensional t copulas with $P^{(2)}=V_5(\nu)$ and their nearest Gaussian vines. Middle table: sdKL and MCKL values between five-dimensional t copulas with $K=K_5(\tau)$ and their nearest Gaussian vines. Right table: sdKL and MCKL values between a five-dimensional Gumbel vine and vines constructed using one copula family only. The third column contains the ratio between sdKL and MCKL.}
		\label{tab:sdKL_Plausi1}
	\end{table*}
	
\end{exa}
These five-dimensional plausibility checks empirically show that the reduction from all to one diagonal still yields viable results for our modified version of the KL distance. As a final application we want to compare all distance measures introduced in this paper.

\subsection{Comparison of all introduced distance measures}
\begin{table*}[t]
	\centering
	\begin{tabular}{|l|rrrrrrrr|}
		\hline
		$d$ & 3 & 4 & 5 & 7 & 10 & 15 & 20 & 30 \\
		\hline
		aKL & 98.4 & 98.5 & 98.4 & -- & -- & -- & -- & -- \\
		dKL & 96.7 & 97.4 & 98.7 & 98.7 & 97.2 & -- & -- & -- \\
		sdKL & 93.3 & 90.2 & 91.5 & 89.7 & 82.9 & 84.8 & 84.5 & 80.4 \\
		MCKL & 99.8 & 99.5 & 99.5 & 99.7 & 98.5 & 97.2 & 92.3 & 91.7 \\
		\hline
	\end{tabular}
	\caption{Rank correlations (in percent) between the true Kullback-Leibler distance and the results of aKL, dKL, sdKL and MCKL, respectively, for dimensions $d=3,4,5,7,10,15,20,30$.}
	\label{tab:Gaussvines}
\end{table*}
\begin{table*}[t]
	\centering
	\begin{tabular}{|l|rrrrrrrr|}
		\hline
		$d$  & 3 & 4 & 5 & 7 & 10 & 15 & 20 & 30 \\
		\hline
		aKL & 3.46 & 117.67 & 4357.91 & -- & -- & -- & -- & -- \\
		dKL & 0.23 & 0.82 & 2.49 & 19.63 & 338.32 & -- & -- & -- \\
		sdKL & 0.18 & 0.38 & 0.69 & 1.80 & 4.86 & 16.04 & 36.91 & 114.12 \\
		MCKL & 7.35 & 14.50 & 24.12 & 46.05 & 97.54 & 239.17 & 473.41 & 961.12 \\
		\hline
	\end{tabular}
	\caption{Average computational times (in seconds) of aKL, dKL, sdKL and MCKL for dimensions $d=3,4,5,7,10,15,20,30$.}
	\label{tab:Gaussvines_times}
\end{table*}

In the following example, we will investigate the behavior of the KL, aKL, dKL, sdKL and MCKL in dimensions $d=3,4,5,7,10,15,20,30$. We make use of the fact that the Kullback-Leibler distance between Gaussian copulas can be expressed analytically (\cite{hershey2007approximating}). For two Gaussian copulas $c^f$ and $c^g$ with correlation matrices $\Sigma^f$ and $\Sigma^g$, respectively, one has
\[
\KL\big(c^f,c^g\big)=\frac{1}{2}\left\lbrace \ln\left(\frac{\det\left(\Sigma^g\right)}{\det\left(\Sigma^f\right)}\right)+\tr\left( (\Sigma^g)^{-1}\Sigma^f\right)-d\right\rbrace ,
\]
where $\det (\,\cdot\,)$ denotes the determinant and $\tr(\,\cdot\,)$ the trace of a matrix. For each dimension $d$ we use a reference Gaussian vine $\Rc^0$ (which is also a Gaussian copula) with the (D-vine) structure matrix $M=D_d$ and Kendall's $\tau$ matrix $K=K_d(0.5)$ (cf.\ \autoref{eq:D_d} and \autoref{eq:K_d}, respectively). \\
We generate another $m=50$ Gaussian vines $\Rc^r$, $r=1,\ldots,m$, with the same structure matrix $M=D_d$ and a parameter matrix $P^{(1)}$, where the $d(d-1)/2$ partial correlations are simulated such that the corresponding correlation matrix is uniform over the space of valid correlation matrices. For this purpose, we follow \cite{joe2006generating}: For $i=2,\ldots,d$ and $j=1,\ldots,i-1$ we draw $p^{(1)}_{i,j}$ from a $\text{Beta}(i/2,i/2)$ distribution and transform it linearly to $[-1,1]$.\\
We compare $\Rc^0$ to each $\Rc^r$ using the model distances KL, MCKL, aKL, dKL and sdKL. Since the Kullback-Leibler distance is exact in these cases, we can assess the performance of the remaining distance measures by comparing their $m=50$ distance values to the ones of the true KL. As the scale of the KL and related distance measures cannot be interpreted in a sensible way, we are only interested in how well the ordering suggested by the KL is reproduced by aKL ($n=20$), dKL ($n=10$), sdKL ($n=10$) and MCKL ($N_{\text{MC}}=10^6$), respectively. Hence, we consider the respective rank correlations to the KL values in order to assess their performances. The results and average computation times are displayed in \autoref{tab:Gaussvines} and \autoref{tab:Gaussvines_times}, respectively. \\
With a rank correlation of more than $98\%$ the approximate KL performs extremely well for $d=3,4,5$. However, computational times increase drastically with the dimension such that it cannot be computed in higher dimensions in a reasonable amount of time. As the plausibility checks from the previous sections suggested the diagonal KL also produces very good results. In lower dimensions the dKL is competitive regarding computational times. Only for dimensions 10 and higher it becomes slower due to the exponentially increasing number of diagonals. Therefore, calculations have not been performed for $d=15,20,30$. As expected, restricting to only one diagonal reduces computational times considerably such that even in very high dimensions they are kept to a minimum. Of course, this restriction comes along with slight loss of performance, still achieving a rank correlation of over $80\%$ in 30 dimensions. Being a consistent estimator of the KL distance (for $N_{\text{MC}}\to \infty$), the Monte-Carlo KL has the best performance of the considered model distances. However, the performance decreases for high dimensions due to the curse of dimensionality ($N_{\text{MC}}=10^6$ for all $d$). Further, the price of the slightly better performance (compared to sdKL) is a considerably higher computational time, e.g.\ in 10 and 30 dimensions the sdKL is roughly 20 and 9 times faster than the MCKL, respectively.

Altogether we can say that in order to have good performance and low computational times one should use the dKL in lower dimensions and then switch to the sdKL in higher dimensions in order to obtain a usable proxy for the KL distance at (relatively) low computational costs.

\section{Conclusion}\label{sec:conclusion}
In this paper we have developed new methods for measuring model distances between vine copulas. Since vines are frequently used for high dimensional dependence modeling, the focus was to propose concepts that can in particular be applied to higher dimensional models. With the approximate Kullback-Leibler distance we introduced a measure which converges to the original Kullback-Leibler distance and therefore produces good approximations. Although being considerably faster than the calculation of the KL by numerical integration, the aKL suffers from the curse of dimensionality and therefore is not computationally tractable in dimensions $d\geq6$. Being a more crude approximation the diagonal Kullback-Leibler distance, which highlights the difference between vines conditioned on points on the diagonals, has proven itself to be a reliable and computationally parsimonious model distance measure for comparing vines up to 10 dimensions. In higher dimensions the number of diagonals becomes intractable, which is why we suggested to reduce calculations to only one diagonal with large density values, introducing the single diagonal Kullback-Leibler distance. With the sdKL, we have found a possibility to overcome the shortfalls of alternative methods like Monte Carlo (low speed and randomness) and at the same time maintain the desired properties of the Kullback-Leibler distance relatively well.\\
In ongoing research we address ourselves to applying our distance measures in many possible fields: applications to real data sets, comparing C- and D-vines, determining appropriate truncation levels for vines and comparing non-simplified vines with simplified ones in order to develop a test to decide whether the simplifying assumption is satisfied for given data.

\section*{Appendix}
\appendix

\section{Proof of \autoref{prop:column}}\label{App:column}
From \autoref{eq:vinepdfdecomp} we know that the vine copula density can be written as a product over the pair-copula expressions corresponding to the matrix entries. In Property 2.8 (ii), \cite{dissmann2013selecting} state that deleting the first row and column from a $d$-dimensional structure matrix yields
a $(d-1)$-dimensional trimmed structure matrix. Due to Property 2 from \autoref{defi:RVM} the entry $m_{1,1}=1$ does not appear in the remaining matrix. Hence, we obtain the density $c_{2:d}$ by taking the product over all pair-copula expressions corresponding to the entries in the trimmed matrix. Iterating this argument yields that the entries of matrix $M_k:=(m_{i,j})_{i,j=k+1}^d$ resulting from cutting the first $k$ rows and columns from $M$ represent the density $c_{(k+1):d}$. In general, we have
\[
c_{j|(j+1):d}\left(u_j|u_{j+1},\ldots,u_d\right)=\frac{c_{j:d}\left(u_j,\ldots,u_d\right)}{c_{(j+1):d}\left(u_{j+1},\ldots,u_d\right)}.
\]
The numerator and denominator can be obtained as the product over all pair-copula expressions corresponding to the entries of $M_{j-1}$ and $M_j$. Thus, $c_{j|(j+1):d}$ is simply the product over the expressions corresponding to the entries from the first column of $M_{j-1}$. This proves \autoref{eq:conddens}.\hfill $\square$
\section{Proof of \autoref{prop:KLdecomp}} \label{App:KLdecomp}

Recall that using recursive conditioning we can obtain
\[
c^{i}\left(u_1,\ldots,u_d\right) =\prod_{j=1}^d{c^{i}_{j|(j+1):d}}\left(u_j|\ub_{(j+1):d}\right),\quad i\in\left\{f,g\right\}.
\]
Thus, the Kullback-Leibler distance between $c^{f}$ and $c^{g}$ can be written in the following way:
\begin{equation*}
\begin{split}
\KL\big(c^{f},c^{g}\big)&=\int\limits_{\ub\in\left[0,1\right]^d}{\ln\left(\frac{c^{f}(\ub)}{c^{g}(\ub)}\right)c^{f}(\ub)\diff\ub}\\
&=\int\limits_{\ub\in\left[0,1\right]^d}\sum_{j=1}^{d}{\ln\left(\frac{c^{f}_{j|(j+1):d}\left(u_j|\ub_{(j+1):d}\right)}{c^{g}_{j|(j+1):d}\left(u_j|\ub_{(j+1):d}\right)}\right)c^{f}(\ub)\diff\ub}\\
&=\sum_{j=1}^{d}\int\limits_{u_d\in\left[0,1\right]}\!\!\cdots\!\!\int\limits_{u_1\in\left[0,1\right]}{\ln\left(\frac{c^{f}_{j|(j+1):d}\left(u_j|\ub_{(j+1):d}\right)}{c^{g}_{j|(j+1):d}\left(u_j|\ub_{(j+1):d}\right)}\right)c^{f}(u_1,\ldots,u_d)\diff u_1\cdots\ \diff u_d}\\
&=\sum_{j=1}^{d}\int\limits_{u_d\in\left[0,1\right]}\!\!\cdots\!\!\int\limits_{u_j\in\left[0,1\right]}\ln\left(\frac{c^{f}_{j|(j+1):d}\left(u_j|\ub_{(j+1):d}\right)}{c^{g}_{j|(j+1):d}\left(u_j|\ub_{(j+1):d}\right)}\right)\\
&\quad \times\left\{\int\limits_{u_{j-1}\in\left[0,1\right]}\!\!\cdots\!\!\int\limits_{u_1\in\left[0,1\right]}c^{f}(u_1,\ldots,u_d)\diff u_1\cdots \diff u_{j-1}\right\}\diff u_j\cdots\ \diff u_d\\
&=\sum_{j=1}^{d}\int\limits_{u_d\in\left[0,1\right]}\!\!\cdots\!\!\int\limits_{u_j\in\left[0,1\right]}\ln\left(\frac{c^{f}_{j|(j+1):d}\left(u_j|\ub_{(j+1):d}\right)}{c^{g}_{j|(j+1):d}\left(u_j|\ub_{(j+1):d}\right)}\right)c^{f}_{j,\ldots,d}(u_j,\ldots,u_d)\diff u_j\cdots\ \diff u_d\\
&=\sum_{j=1}^{d}\int\limits_{u_d\in\left[0,1\right]}\!\!\cdots\!\!\int\limits_{u_{j+1}\in\left[0,1\right]}\left\{\int\limits_{u_{j}\in\left[0,1\right]}\ln\left(\frac{c^{f}_{j|(j+1):d}\left(u_j|\ub_{(j+1):d}\right)}{c^{g}_{j|(j+1):d}\left(u_j|\ub_{(j+1):d}\right)}\right)c^{f}_{j|(j+1):d}(u_j|\ub_{(j+1):d})\diff u_j\right\}\\
&\quad \times c^{f}_{(j+1):d}(\ub_{(j+1):d})\diff u_{j+1}\cdots \diff u_d\\
&=\sum_{j=1}^d\Ebb_{c^{f}_{(j+1):d}}\left[\KL\left(c^{f}_{j|(j+1):d}(\,\cdot\,|\Ub_{(j+1):d}),c^{g}_{j|(j+1):d}(\,\cdot\,|\Ub_{(j+1):d})\right)\right]. 
\end{split} 
\end{equation*}
\hfill $\square$

\section{Number of vines with the same diagonal} \label{App:same_diag}
\begin{prop}\label{prop:numbervines}
	Let $\sigmab=\left(\sigma_1,\ldots,\sigma_d\right)'$ be a permutation of ${1\!:\!d}$. Then, there exist $2^{\binom{d-2}{2}+d-2}$ different vine decompositions whose structure matrix has the diagonal $\sigmab$.
\end{prop}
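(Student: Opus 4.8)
The plan is to translate the statement into a count of structure matrices and then run an induction on the dimension. First, by relabelling the variables we may assume $\sigmab={1\!:\!d}$, which is exactly the normalisation already adopted in the paper. I would then argue that a vine decomposition admitting a structure matrix with diagonal ${1\!:\!d}$ admits \emph{exactly one} such matrix: reading off the top tree (the pair-copula with a $(d-2)$-element conditioning set), one sees that it must involve the variable indexed by $m_{1,1}=1$, which pins down $m_{2,1}$; peeling off one conditioning level at a time, in the same spirit as the trimming argument of \autoref{App:column}, pins down the remaining entries of each column inductively. Hence it suffices to count the valid lower-triangular matrices $M$ with diagonal ${1\!:\!d}$ satisfying Properties 1--3 of \autoref{defi:RVM}. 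I would also record the elementary identity $\binom{d-2}{2}+d-2=\binom{d-1}{2}$, so that the target number is $2^{\binom{d-1}{2}}$.

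Next I would set up the induction on $d$. Property 2 forces $m_{j,j}=j$, and a downward induction on the column index shows that column $j$, viewed as a set, must equal $\{j,j+1,\ldots,d\}$. In particular the sub-array $M_{1}=(m_{i,j})_{i,j=2}^{d}$ obtained by deleting the first row and column is, after the relabelling $i\mapsto i-1$, a valid $(d-1)$-dimensional structure matrix with diagonal ${1\!:\!(d-1)}$, and (by the claim below) every such $(d-1)$-dimensional matrix arises this way. So the number $a_d$ of valid $d$-dimensional matrices with diagonal ${1\!:\!d}$ satisfies $a_d=N_d\cdot a_{d-1}$, where $N_d$ is the number of admissible first columns $(1,m_{2,1},\ldots,m_{d,1})'$ that can be prepended to a fixed $M_1$.

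The heart of the proof is the claim that $N_d=2^{d-2}$ \emph{independently of $M_1$}. Here Property 1 with $i=1$ is automatic, since the below-diagonal set in column $1$ is all of $\{2,\ldots,d\}$, so the only genuine constraint is the proximity condition, Property 3 with $i=1$. Reading the first column from the bottom upwards, the entry $m_{k,1}$ together with the entries below it encodes the pair-copula $c_{m_{k,1},1;m_{k+1,1},\ldots,m_{d,1}}$ sitting at tree level $T_{d-k+1}$, and Property 3 says precisely that this pair-copula must be admissibly ``grafted'' onto the part of the vine already encoded by $M_1$. One then checks, by a case analysis of Property 3, that given $m_{d,1},\ldots,m_{k+1,1}$ the admissible values of $m_{k,1}$ are controlled by the local tree structure in such a way that the choices, although \emph{not} uniformly binary level by level, sum over all of column $1$ to exactly $2^{d-2}$ completions (this is verified directly for small $d$: for $d=4$, each of the two $3$-dimensional sub-arrays admits precisely $4$ first columns, distributed differently over the choice of $m_{d,1}$). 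Feeding $N_d=2^{d-2}$ and the base case $a_2=1$ into the recursion gives $a_d=\prod_{k=2}^{d}2^{k-2}=2^{\binom{d-1}{2}}=2^{\binom{d-2}{2}+d-2}$, as claimed.

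The main obstacle is exactly the uniform count $N_d=2^{d-2}$: the per-attachment-point contributions genuinely depend on the vertex degrees in the lower-level trees of $M_1$, so the argument cannot be a naive ``two choices at each step'' and instead requires organising the proximity condition so that these contributions telescope to a power of two regardless of $M_1$. A secondary technical point is the uniqueness statement (one structure matrix per qualifying decomposition and fixed diagonal). Both can be cross-checked against the known global count $\tfrac{d!}{2}2^{\binom{d-2}{2}}$ of \cite{morales2011count} and against the cases $d=2,3,4$.
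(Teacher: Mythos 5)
Your route is genuinely different from the paper's. The paper disposes of the statement by double counting: \cite{morales2011count} give $\tfrac{d!}{2}\,2^{\binom{d-2}{2}}$ vine decompositions, each admits $2^{d-1}$ structure matrices (one per run of the encoding algorithm of \cite{stoeber2012}), these encodings have pairwise distinct diagonals, and by relabelling symmetry the resulting $\tfrac{d!}{2}\,2^{\binom{d-2}{2}}\,2^{d-1}$ matrices are equidistributed over the $d!$ possible diagonals; dividing gives $2^{\binom{d-2}{2}+d-2}$. Your column-by-column recursion would instead be a self-contained constructive proof that does not import the global count (run backwards it would even reprove it). The normalisation to $\sigmab={1\!:\!d}$, the identity $\binom{d-2}{2}+d-2=\binom{d-1}{2}$, the uniqueness of the encoding for a fixed diagonal, and the reduction to $a_d=N_d\,a_{d-1}$ are all sound.

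The gap is exactly where you say it is: the uniform count $N_d=2^{d-2}$ is asserted (``one then checks, by a case analysis of Property 3'') and verified only for $d\le 4$, so the proposal is incomplete at its load-bearing step. The claim is true, and the telescoping you anticipate can be made precise. Writing $T_1',\ldots,T_{d-2}'$ for the trees of the $(d-1)$-dimensional vine encoded by $M_1$, prepending an admissible first column is equivalent to choosing a chain: a node $v$ of $T_1'$, an edge $e_1$ of $T_1'$ containing $v$, an edge $e_2$ of $T_2'$ incident to the node $e_1$, and so on up to the unique edge of $T_{d-2}'$; Property 3 of \autoref{defi:RVM} with $i=1$ encodes precisely this proximity chain, from which $(m_{d,1},\ldots,m_{2,1})'$ is read off. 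Let $f_k(e)$ denote the number of such partial chains terminating in the edge $e$ of $T_k'$. Then $f_1(e)=2$, since every edge of a tree has exactly two endpoints, and $f_{k+1}(\{a,b\})=f_k(a)+f_k(b)$, since every edge of $T_{k+1}'$ has exactly two endpoints in $T_k'$. By induction $f_k\equiv 2^k$ regardless of the shapes of the trees, whence $N_d=f_{d-2}=2^{d-2}$ uniformly in $M_1$ (your $d=4$ split $1,2,1$ over the three attachment points is this sum organised by $v$ rather than by $e_1$). With that lemma supplied, your induction closes and yields $a_d=2^{\binom{d-1}{2}}$ as claimed; without it, the statement is only checked in low dimensions.
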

\begin{proof}
	The number of vine decompositions whose structure matrix has the same diagonal $\sigma$ can be calculated as the the quotient of the number of valid structure matrices and the number of possible diagonals. \cite{morales2011count} show that there are $\frac{d!}{2}\cdot 2^{\binom{d-2}{d}}$ different vine decompositions. In each of the $d-1$ steps of the algorithm for encoding a vine decomposition in a structure matrix \citep[see][]{stoeber2012} we have two possible choices such that there are $2^{d-1}$ structure matrices representing the same vine decomposition. Hence, there are in total $\frac{d!}{2}\cdot 2^{\binom{d-2}{d}}\cdot 2^{d-1}$ valid structure matrices. Further, there are $d!$ different diagonals. Thus, for a fixed diagonal $\sigmab$ there exist 
	\[
	\frac{\frac{d!}{2}\cdot 2^{\binom{d-2}{d}}\cdot 2^{d-1}}{d!}=2^{\binom{d-2}{2}+d-2}
	\]
	different vine decompositions.
\end{proof}

\section{Proof of \autoref{prop:aKL_limit}} \label{App:aKL_limit}
Let $\eps>0$ and $n\in\Nbb$. To simplify notation, for $j=1,\ldots,d-1$ we define
\[
\kappa_j\left(\ub_{(j+1):d}\right):=\KL\left(c^{f}_{j|(j+1):d}\left(\,\cdot\,|\ub_{(j+1):d}\right),c^{g}_{j|(j+1):d}\left(\,\cdot\,|\ub_{(j+1):d}\right)\right).
\]
Then, by definition
\[
\begin{split}
\aKL\big(\Rc^f,\Rc^g\big)=\sum_{j=1}^{d-1}\frac{1}{n^{d-j}}\sum_{\ub_{(j+1):d}\in\Gc_j}\kappa_j\left(\ub_{(j+1):d}\right)=\sum_{j=1}^{d-1}\frac{1}{n^{d-j}}\sum_{\wb_{(j+1):d}\in\Wc_j}\kappa_j\big(T_{c_{(j+1):d}^f}(\wb_{(j+1):d})\big).
\end{split}
\]
Since $W_j$ is a discretization of $[\eps,1-\eps]^{d-j}$ with mesh going to zero for $n\to \infty$, we have
\begin{equation*}
\frac{1}{n^{d-j}}\sum_{\wb_{(j+1):d}\in\Wc_j}\kappa_j\big(T_{c_{(j+1):d}^f}(\wb_{(j+1):d})\big)\stackrel{n\to\infty}{\longrightarrow}\int_{[\eps,1-\eps]^{d-j}}\kappa_j\big(T_{c_{(j+1):d}^f}(\wb_{(j+1):d})\big)\diff \wb_{(j+1):d}.
\end{equation*}
Substituting $\wb_{(j+1):d}=T^{-1}_{c_{(j+1):d}^f}\left(\ub_{(j+1):d}\right)$ yields
\begin{equation*}
\int\limits_{[\eps,1-\eps]^{d-j}}\kappa_j\big(T_{c_{(j+1):d}^f}(\wb_{(j+1):d})\big)\diff \wb_{(j+1):d}=\!\!\int\limits_{T_{c_{(j+1):d}}^f\left([\eps,1-\eps]^{d-j}\right)}\!\!\!\!\kappa_j\big(\ub_{(j+1):d}\big) c_{(j+1):d}^f\left(\ub_{(j+1):d}\right)\diff \ub_{(j+1):d}
\end{equation*}
since 
\[
T^{-1}_{c_{(j+1):d}^f}\left(\ub_{(j+1):d}\right)=\big(C^f_{j+1|(j+2):d}(u_{j+1}|\ub_{(j+2):d}), \ldots, C^f_{d-1|d}(u_{d-1}|\ub_{d}), u_d\big)'
\] 
with (upper triangular) Jacobian matrix
\[
J=J_{T^{-1}_{c_{(j+1):d}^f}}\mkern-18mu\left(\ub_{(j+1):d}\right)=\begin{pmatrix}
c^f_{j+1|(j+2):d}(u_{j+1}|\ub_{(j+2):d}) & &  & \\
& \ddots & {\LARGE\text{*}} & \\
{\Large\text{0}} & & c_{d-1|d}^f(u_{d-1}|\ub_{d}) & \\
& & & 1
\end{pmatrix}
\]
such that $\diff \wb_{(j+1):d}= \det(J) \diff \ub_{(j+1):d}=c_{(j+1):d}^f\left(\ub_{(j+1):d}\right)\diff \ub_{(j+1):d}$. Since we are only interested in the determinant of $J$, whose lower triangular matrix contains only zeros, the values in the upper triangular matrix (denoted by $\Ast$) are irrelevant here. Finally, using the fact that
\[
\lim_{\eps\to 0} T_{c_{(j+1):d}^f}\left([\eps,1-\eps]^{d-j}\right)=T_{c_{(j+1):d}^f}\left([0,1]^{d-j}\right)=[0,1]^{d-j},
\] we obtain
\[
\begin{split}
\lim_{\eps\to 0}\lim_{n\to\infty}\aKL\big(\Rc^f,\Rc^g\big) = \sum_{j=1}^{d-1} \int\limits_{[0,1]^{d-j}}\!\!\!\!\kappa_j\big(\ub_{(j+1):d}\big) c_{(j+1):d}^f\left(\ub_{(j+1):d}\right)\diff \ub_{(j+1):d}\stackrel{\text{\textcolor{TUMblue}{Prop. }}\ref{prop:KLdecomp}}{=}\KL\big(c^f,c^g\big).
\end{split}
\]
\hfill $\square$

\section{Regarding \autoref{rem:dKL}}\label{App:remark}
\subsection{Limit of the dKL} \label{App:dKL_approx}
Let $\eps>0$ and $n\in\Nbb$. Again, for $j=1,\ldots,d-1$ we define
\[
\kappa_j\left(\ub_{(j+1):d}\right):=\KL\left(c^{f}_{j|(j+1):d}\left(\,\cdot\,|\ub_{(j+1):d}\right),c^{g}_{j|(j+1):d}\left(\,\cdot\,|\ub_{(j+1):d}\right)\right).
\]
The contribution of $\Dc^u_{j,k}$, $j=1,\ldots,d-1$, $k=1,\ldots,2^{d-j-1}$, to the dKL is given by
\[
\begin{split}
\frac{1}{n}\sum_{\ub_{(j+1):d}\in \Dc^u_{j,k}}\kappa_j\left(\ub_{(j+1):d}\right)=\frac{1}{n}\sum_{\wb_{(j+1):d}\in \Dc^w_{j,k}}\kappa_j\big(T_{c_{(j+1):d}^f}(\wb_{(j+1):d})\big)=\frac{1}{n}\sum_{i=1}^n\kappa_j\big(T_{c_{(j+1):d}^f}(\omegab(t_i))\big), 
\end{split}
\]
where $\omegab(t)=\rb+t \vb(\rb)$ with $\rb\in \left\lbrace 0,1\right\rbrace^{d-j}$ being a corner point of $D^w_{j,k}$ and $t_i=\eps+(i-1)\frac{1-2\eps}{n-1}$ for $i=1,\ldots,n$. Letting $n\to\infty$ yields
\begin{equation}\label{eq:lim_int}
\frac{1}{n}\sum_{i=1}^n\kappa_j\big(T_{c_{(j+1):d}^f}(\omegab(t_i))\big)\stackrel{n\to\infty}{\longrightarrow}\int_{t\in[\eps,1-\eps]}\kappa_j\big(T_{c_{(j+1):d}^f}(\omegab(t))\big)\diff t.
\end{equation}
Now, we further let $\eps\to0$ and use the fact that $\left\| \dot{\omegab}(t)\right\|=\sqrt{d-j}$ to obtain
\[
\begin{split}
\int\limits_{t\in[0,1]}\kappa_j\big(T_{c_{(j+1):d}^f}(\omegab(t))\big)\diff t&=\frac{1}{\sqrt{d-j}}\int\limits_{t\in[0,1]}\kappa_j\big(T_{c_{(j+1):d}^f}(\omegab(t))\big)\left\| \dot{\omegab}(t)\right\|\diff t\\
&=\frac{1}{\sqrt{d-j}}\int\limits_{\wb_{(j+1):d}\in D^w_{j,k}}\kappa_j\big(T_{c_{(j+1):d}^f}(\wb_{(j+1):d})\big)\diff\wb_{(j+1):d}\\
&=\frac{1}{\sqrt{d-j}}\int\limits_{\ub_{(j+1):d}\in D^u_{j,k}}\kappa_j\big(\ub_{(j+1):d}\big)c_{(j+1):d}\big(\ub_{(j+1):d}\big)\diff\ub_{(j+1):d},
\end{split}
\]
where we used the substitution $\ub_{(j+1):d}:=T^{-1}_{c_{(j+1):d}^f}(\wb_{(j+1):d})$, $\diff\wb_{(j+1):d}=c_{(j+1):d}^f(\ub_{(j+1):d})\diff \ub_{(j+1):d}$ (cf.\ \autoref{App:aKL_limit}) in the last line. \hfill $\square$

\subsection{Tail transformation} \label{App:tail_trafo}
In our empirical applications of the dKL, we have noticed that different vines tend to differ most in the tails of the distribution. Therefore, we increase the concentration of evaluation points in the tails of the diagonal by transforming the points $t_i$, $i=1,\ldots,n$, via a suited function $\Psi$. Hence, by substituting $t=\Psi(s)$ in \autoref{eq:lim_int} we obtain
\[
\int_{s\in\Psi^{-1}([\eps,1-\eps])}\kappa_j\Big(T_{c_{(j+1):d}^f}\big(\eta\big(\Psi(s)\big)\big)\Big)\Psi'(s)\diff s.
\]
We use its discrete pendant
\[
\frac1n\sum_{i=1}^n\kappa_j\Big(T_{c_{(j+1):d}^f}\big(\eta\big(\Psi(s_i)\big)\big)\Big)\Psi'(s_i),
\]
where $s_i=\Psi^{-1}(\eps)+(i-1)\frac{\Psi^{-1}(1-\eps)-\Psi^{-1}(\eps)}{n-1}$ for $i=1,\ldots,n$. Regarding the choice of $\Psi$, all results in this paper are obtained using
\[
\Psi_a\colon [0,1]\to [0,1],\quad \Psi_a(t):=\frac{\Phi(2a(t-0.5))-\Phi(-a)}{2\Phi(a)-1}
\]
with shape parameter $a>0$, where $\Phi$ is the standard normal distribution function.
\autoref{fig:prob_trafo} shows the graph of $\Psi_a$ for different values of $a$. We see that larger values of $a$ imply more points being transformed into the tails. Having tested different values for $a$, we found that $a=4$ yields the best overall results. Therefore, we consistently use $a=4$. 
\begin{figure}[!htbp]
	\centering
	\includegraphics[trim=0.3cm 0.6cm 0.3cm 2cm,clip,width=0.5\textwidth]{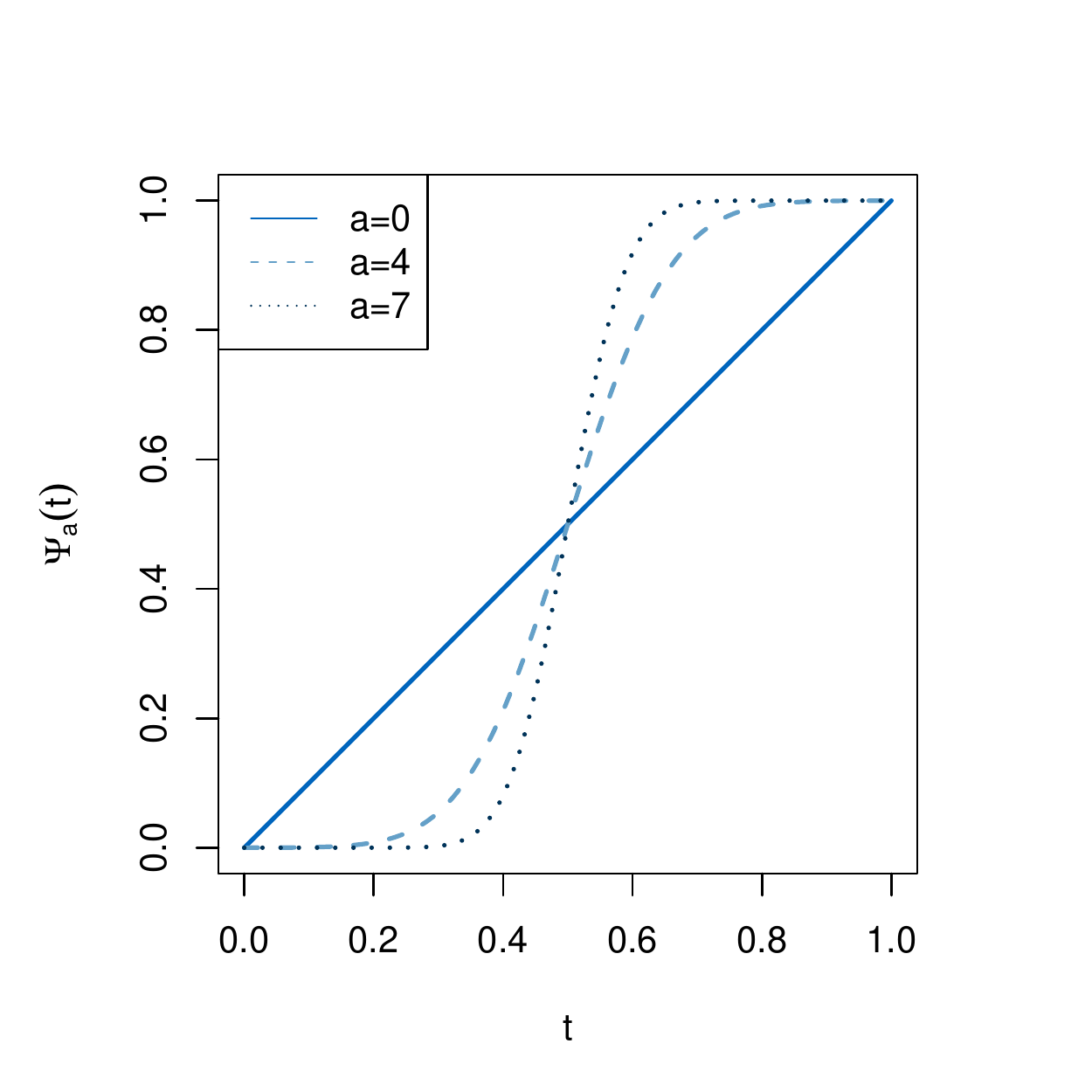}
	\caption{Plot of $\Psi_a$ for $a=0,4,7$.}
	\label{fig:prob_trafo}
\end{figure}

\section{Finding the diagonal with the highest weight} \label{App:weight}
\subsection{Procedure 1: Finding a starting value}\label{App:start}

The idea behind the following heuristic is that a diagonal has a higher weight if its points have high probability implied by the copula density. Hence, the diagonal should reflect the dependence structure of the variables. The unconditional dependence in a vine captures most of the total dependence and is easy to interpret. For example, if $U_i$ and $U_j$ are positively dependent (i.e.\ $\tau_{i,j}>0$) and $U_j$ and $U_k$ are negatively dependent (i.e.\ $\tau_{j,k}<0$), then it seems plausible that $U_i$ and $U_k$ are negatively dependent. This concept can be extended to arbitrary dimensions.
\begin{enumerate}
	\item Take each variable to be a node in an empty graph.
	\item Consider the last row of the structure matrix, encoding the unconditional pair-copulas. Connect two nodes by an edge if the dependence of the corresponding variables is described by one of those copulas.
	\item Assign a ``+" to node 1.
	\item As long as not all nodes have been assigned a sign, repeat:
	\begin{enumerate}
		\item For each node that has been assigned a sign in the previous step, consider its neighborhood.
		\item If the root node has a ``+", then assign to the neighbor node the sign of the Kendall's $\tau$ of the pair-copula connecting the root and neighbor node, else the opposite sign.		 
	\end{enumerate}
	\item The resulting direction vector $\vb\in \left\lbrace -1,1\right\rbrace ^d$ has entries $v_i$ which are $1$ or $-1$ if node $i$ is has been assigned a ``$+$" or a ``$-$", respectively.
\end{enumerate}
Note that if we had assigned a ``$-$'' to node 1 in step 3, we would have ended up with $-v$ instead of $v$, implying the same diagonal.

To illustrate the procedure from above we consider a nine-dimensional example: Let $\Rc$ be a vine copula with density $c$, where the following (unconditional) pair-copulas are specified:
\begin{table}[h]
	\centering
	\setlength{\tabcolsep}{5pt}
	\begin{tabular}{|l|cccccccc|}
		\hline
		pair-copula & $c_{1,2}$ & $c_{1,3}$ & $c_{3,4}$ & $c_{3,5}$ & $c_{2,6}$ & $c_{6,7}$ & $c_{7,8}$ & $c_{7,9}$\\
		\hline
		Kendall's $\tau$ & $-0.3$ & $0.5$ & $0.2$ & $-0.4$ & $0.5$ & $0.5$ & $-0.4$ & $0.6$\\
		\hline
	\end{tabular}
	\caption{Specification of the pair-copulas with empty conditioning set.}
	\label{tab:9dvine}
\end{table}\\
Now, we take an empty graph with node 1 to 9 and add edges $(i,j)$ if $c_{i,j}$ is specified in \autoref{tab:9dvine}. The result is a tree on the nodes 1 to 9 (see \autoref{fig:Foto}). We assign a ``$+$" to node 1 and consider its neighborhood $\left\lbrace 2,3\right\rbrace$ as there are still nodes without a sign. Since $\tau_{1,2}<0$ and the root node 1 has been assigned a ``$+$", node 2 gets a ``$-$". Node 3 is assigned a ``$+$". Next, we repeat this procedure for the neighborhoods of nodes 2 and 3. Iterating in this way until all nodes have been assigned a ``$+$" or a ``$-$" we obtain what is shown in \autoref{fig:Foto}. The resulting direction vector is given by $\vb=(1,-1,1,1,-1,-1,-1,1,-1)'$.
\begin{figure}
	\centering
	\includegraphics[width=.80\textwidth]{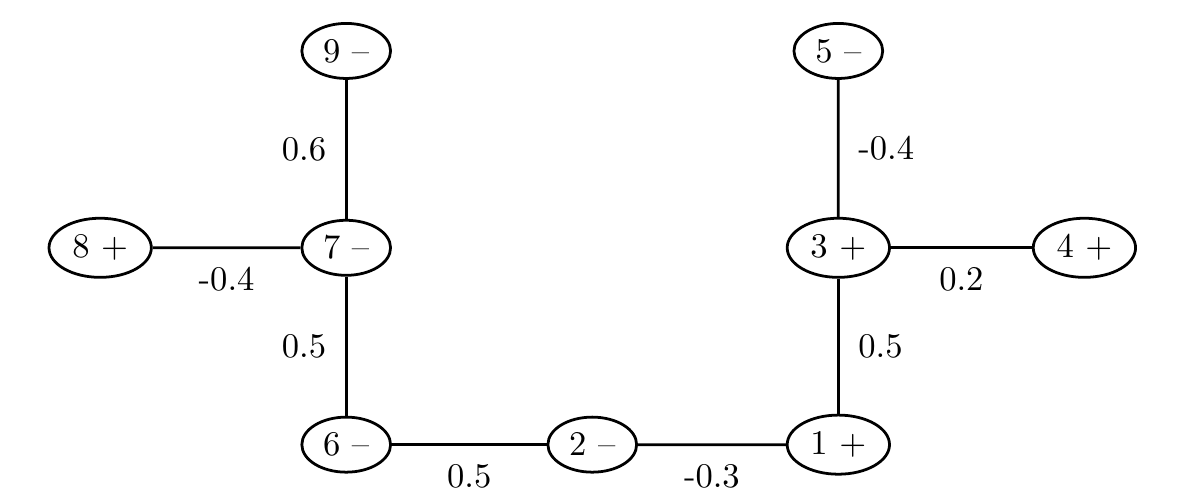}
	\caption{Example for finding the candidate vector.}
	\label{fig:Foto}
\end{figure}

\subsection{Procedure 2: Local search for better candidates}\label{App:better}
Having found a diagonal through Procedure 1 (\aref{App:start}), we additionally perform the following steps in order to look if there is a diagonal with even higher weight in the ``neighborhood'' of $\vb$.

\begin{enumerate}
	\item Consider a candidate diagonal vector $\vb\in\left\{1,-1\right\}^d$ with corresponding weight $\lambda_c^{(0)}$.
	\item For $j=1,\ldots,d$, calculate the weight $\lambda_c^{(j)}$ corresponding to $\vb_j\in\left\{1,-1\right\}^d$, where $\vb_j$ is equal to $\vb$ with the sign of the $j$th entry being reversed.
	\item If $\max_i \lambda_c^{(i)}>\lambda_c^{(0)}$, take $\vb:=\vb_k$ with $k=\argmax_i \lambda_c^{(i)}$ to be the new candidate for the (local) maximum.
	\item Repeat the steps 1--3 until a (local) maximum is found, i.e.\ $\max_i \lambda_c^{(i)}\leq\lambda_c^{(0)}$.
\end{enumerate}
Although there is no guarantee that we really find the global maximum of the diagonal weights, this procedure in any case finds a local maximum. Starting with a very plausible choice of $\vb$ it is highly likely that we end up with the ``right" diagonal.\\
In step 2 the weight of numerous diagonals has to be calculated. For a fast determination of these weights it is reasonable to approximate the integral in \autoref{eq:dwm} by
\[
\lambda_c(D)\approx \frac{1}{n}\sum_{i=1}^n c(\gammab (t_i))\left\|\dot{\gammab}(t_i)\right\|,
\]
where $0<t_1<t_2<\ldots<t_n<1$ is an equidistant discretization of $[0,1]$.

\bibliographystyle{apalike}
\bibliography{KLvines}

\end{document}